\tikzstyle{startstop} = [rectangle, minimum width=0.4cm, minimum height=0.4cm,text centered, draw=black, fill=red!30]
\tikzstyle{io} = [rectangle, minimum width=1cm, minimum height=0.5cm, text centered, draw=black, fill=blue!20]
\tikzstyle{process} = [rectangle, minimum width=1cm, minimum height=0.5cm, text centered, text width=2.2cm, draw=black, fill=orange!30]
\tikzstyle{blank} = [rectangle, minimum width=0.5cm, minimum height=0.5cm, text width=2.4cm, draw=white, fill=white!30]
\tikzstyle{arrow} = [thick,->,>=stealth]
\tikzstyle{inputOutput} = [rectangle, draw=white, fill=white!30]
\tikzstyle{operator} = [rectangle, draw=black, fill=white!30]
\tikzset{meter/.append style={draw, inner sep=5, rectangle, font=\vphantom{A}, minimum width=20, line width=.5,
 path picture={\draw[black] ([shift={(.1,.15)}]path picture bounding box.south west) to[bend left=60] ([shift={(-.1,.15)}]path picture bounding box.south east);\draw[black,-latex] ([shift={(0,.1)}]path picture bounding box.south) -- ([shift={(.18,-.08)}]path picture bounding box.north);}}}
\tikzset{cross/.style={cross out, draw=black, minimum size=2*(#1-\pgflinewidth), inner sep=0pt, outer sep=0pt},
cross/.default={1pt}}
\definecolor{darkblue}{rgb}{0,0,0.5}
\definecolor{darkgreen}{rgb}{0,0.5,0}
\DeclareMathOperator{\tr}{Tr}
\renewcommand*{\ALG@name}{Protocol}
\providecommand\theHALG@line{\thealgorithm.\arabic{ALG@line}}
\newenvironment{breakablealgorithm}
  {
   \begin{center}
     \refstepcounter{algorithm}
     \hrule height.8pt depth0pt \kern2pt
     \renewcommand{\caption}[2][\relax]{
       {\raggedright\textbf{\fname@algorithm~\thealgorithm} ##2\par}%
       \ifx\relax##1\relax 
         \addcontentsline{loa}{algorithm}{\protect\numberline{\thealgorithm}##2}%
       \else 
         \addcontentsline{loa}{algorithm}{\protect\numberline{\thealgorithm}##1}%
       \fi
       \kern2pt\hrule\kern2pt
     }
  }{
     \kern2pt\hrule\relax
   \end{center}
  }
\crefname{algorithm}{protocol}{protocols}
\Crefname{algorithm}{Protocol}{Protocols}
\crefname{line}{step}{steps}
\Crefname{line}{Step}{Steps}
\newtheorem{theorem}{Theorem}[section]
\newtheorem{lem}[theorem]{Lemma}
\newtheorem{prop}[theorem]{Proposition}
\theoremstyle{definition}
\newtheorem{defn}[theorem]{Definition}
\theoremstyle{remark}
\newcommand*{\TOGGLE}{}
\title{
Device-Independent Oblivious Transfer\\ from the Bounded-Quantum-Storage-Model \\and  Computational Assumptions}
\author{Anne Broadbent and Peter Yuen\footnote{University of Ottawa, Department of Mathematics and Statistics; \texttt{\{abroadbe,pyuen103\}@uottawa.ca}}}
\date{}
\begin{document}

\ifdefined\TOGGLE
\else
\newgeometry{left=1.91cm,right=1.91cm,top=2.54cm,bottom=2.54cm}
\fi

\maketitle
\begin{abstract}
    We present a device-independent protocol for oblivious transfer (DIOT) and analyze its security under the assumption that the receiver's quantum storage is bounded during protocol execution and that the device behaves independently and identically in each round. We additionally require that, for each device component, the input corresponding to the choice of measurement basis, and the resulting output, is communicated only with the party holding that component. Our protocol is everlastingly secure and, compared to previous DIOT protocols, it is less strict about the non-communication assumptions that are typical from protocols that use Bell inequality violations; instead, the device-independence comes from a protocol for self-testing of a single (quantum) device which makes use of a post-quantum computational assumption.\looseness=-1
\end{abstract}
\ifdefined\TOGGLE

\section{Introduction}
\label{section:intro}
\fi
Oblivious transfer (OT), described in terms of its well-known variant {\em one-out-of-two oblivious transfer} (1-2 OT), involves a sender inputting two bits, and a receiver who is then able to select and receive only one of these bits. For this two-party primitive to be secure, the sender must be unable to learn which of the two bits the receiver chose, and the receiver must be unable to learn both of the bits that were sent. OT is both intriguing and important for the fact it is universal: a secure implementation of OT enables the implementation of any cryptographic functionality between two parties \cite{Kil88}.

However, the search for an unconditionally-secure implementation of oblivious transfer has, unfortunately, been put to rest. In \cite{May97, LC97}, it was shown that, even with the power of quantum communication, it is impossible to realize an unconditionally secure implementation of bit commitment---another two-party primitive---thereby also showing the impossibility of unconditionally secure OT.

Thus, any secure OT must necessarily be realized with the aid of additional assumptions. For instance, some classical OT protocols that are secure by a post-quantum computational assumption are presented in \cite{LH19,PVW08}, though these classical protocols are not everlastingly secure, as shown in \cite{Unr13}. A different approach can be found in \cite{DFR+07}, where it is shown that the bounded-quantum-storage-model (BQSM) is sufficient for realizing secure OT. In this setting, it is assumed that the receiver's quantum storage is bounded during the execution of the protocol. Intuitively, what this enables is the following: a dishonest receiver is limited in its capacity to store qubits during the execution of the protocol, thereby forcing a measurement; this disables a large family of attacks and is formally shown to limit the probability of the adversary determining both of the sender's bits to be negligible. We say that using BQSM to realize OT in this way gives \emph{everlasting} security because a dishonest receiver who has unlimited quantum storage and processing power after the execution of the protocol does not gain any advantage. In \cite{DFR+07}, a protocol for Rand 1-2 OT\textsuperscript{$\ell$} (a \emph{randomized} variant of OT that uses $\ell$-bit strings\ifdefined\TOGGLE, described in \Cref{subsubsection:BackgroundOT}\fi) is shown to have perfect receiver-security and $\varepsilon$-sender-security. Informally, $\varepsilon$-receiver-security says that the real state of the dishonest sender is $\varepsilon$-close, in terms of trace distance, to a state that is independent of the choice bit. Similarly, $\varepsilon$-sender-security says that the real state of the dishonest receiver is $\varepsilon$-close to a state that is independent of information not indexed by the receiver's choice bit, and that information is generated randomly. When $\varepsilon=0$, the security is said to be \emph{perfect}.

The protocol for Rand~1-2~OT\textsuperscript{$\ell$} in \cite{DFR+07} assumes that the device is behaving as intended. In reality, a device may be faulty or may have been manufactured by a party with malicious intent, and the resulting behaviour can lead to a security breach in a protocol whose security proof assumes that the device behaves as intended. The assumption that a device is behaving as intended can be relaxed by lifting to the device-independent (DI) setting, where no assumptions on the inner workings of the device are made. That is, the device is treated as a black box with which the parties classically interact, and through this classical interaction alone, they can {\em test} the device to determine if it is behaving as intended.

Device-independent two-party cryptography was explored in \cite{SCA+11, AMPS16} in the form of bit commitment and coin flipping, though  it should be noted that these works necessarily focus on protocols with weak  security, since they make no additional assumptions,  and thus the impossibility of bit commitment applies. For the device-independent bounded/noisy quantum storage setting, security of the two-party primitive known as Weak String Erasure (from which OT can be constructed) has been proved in the works \cite{KW16, RTK+18}, though they do not achieve true device-independence, as it is assumed that the devices behave independently and identically in each round (this is the \emph{IID assumption)}. The authors in \cite{KW16} do prove security for the non-IID setting, though they assume that, in this case, the dishonest party only makes sequential attacks. Full device-independence (\emph{i.e.}, without the IID assumption) in two-party cryptography is, in general, quite difficult to prove. Despite this, there does exist a fully device-independent protocol for a variant of oblivious transfer, called XOR oblivious transfer, in \cite{KST22}. Soundness of the protocol in \cite{KST22} is quantified through the cheating probabilities of the dishonest parties; that is, the authors show that the cheating probability of each party can be bounded away from 1. As a result of the general difficulty with full device-independence, there exist weaker notions of device-independence such as measurement-device-independence (MDI) where only the measurement devices are treated as black boxes. Security of a protocol for MDI OT has been proved in~\cite{RW20arxiv}.

To prove security of device-independent protocols, one often relies on Bell inequalities. For instance, the Bell inequality known as the CHSH inequality was used in \cite{KW16,RTK+18} to enable device-independence, while the Mermin-Peres magic square game was used in \cite{KST22}. Simply put, the parties are able to point to the presence of an entangled quantum state in the device by observing that their input and output data violate a Bell inequality, for only by measuring an entangled quantum state is such a violation possible \cite{Bel64, CHSH69}. However, such conclusions can only hold if all loopholes have been closed. The locality loophole, for instance, describes the fact that a purely classical device, consisting of two components, can violate a Bell inequality if the two components are allowed to communicate. Thus, if one hopes to use a violation of a Bell inequality as an indicator of something quantum in the device, this loophole must obviously be closed, \emph{i.e.}, one must assume non-communication between the components.

To enforce non-communication, there are two options. The first option is to separate the two components with enough distance such that interacting with the device takes less time than a signal of light to travel from one component to the other. The second option is to perfectly shield the components of the device, thus preventing any signal from one component to reach the other. Implementation can then be done by distributing entanglement on-the-fly (\emph{i.e.}, un-shield the components after a round, distribute an EPR pair, and then re-shield for the next round). Alternatively, all Bell pairs that are to be used in the protocol could be prepared and distributed prior to the start of the protocol, though this requires that each component be able to store a large number of qubits, and it should be pointed out that OT in the BQSM precisely relies on the fact that a dishonest receiver is unable to store all qubits being used in the protocol.

\paragraph{Contributions.} We present a protocol for DI Rand 1-2 OT\textsuperscript{$\ell$} and analyze its security. Our protocol is everlastingly secure (because we work in the BQSM) and relies on the difficulty of the Learning with Errors problem (instead of a non-communication assumption).

\paragraph{Model.} As mentioned above, relying on a violation of a Bell inequality requires the non-communication assumption. For our DI Rand 1-2 OT\textsuperscript{$\ell$} protocol, we rely on a single-device self-testing protocol from \cite{MDCA21,MV21} which assumes that the device cannot solve the Learning with Errors (LWE) problem during the execution of the protocol (instead of the non-communication assumption). Note that the assumption that the LWE problem is difficult is a standard computational assumption in post-quantum cryptography \cite{Pei15eprint, Reg05}. This approach of using a post-quantum computational assumption in single-device self-testing was initiated in the groundbreaking work of \cite{Mah18} and \cite{BCM+18}. Our device is then modelled as consisting of two components connected by a quantum channel, making it equivalent to a single device. While the device can then implement any non-local operation on its two components, we shall require that our protocol has an honest implementation which only requires local operations and EPR pairs that are distributed on-the-fly (one component prepares an EPR pair and sends one qubit of the pair to the other component). We use this post-quantum computational assumption in combination with the BQSM to get everlasting security for our device-independent protocol. The exact assumptions that we make for our DI Rand 1-2 OT\textsuperscript{$\ell$} protocol are:

\begin{enumerate}
    \item {\em The quantum storage of the receiver is bounded during the execution of the protocol.}
    \item {\em The device is computationally bounded in the sense that it cannot solve the Learning with Errors (LWE) problem during the execution of the protocol.}
    \item {\em The device behaves in an IID manner, i.e., it behaves independently and identically in each round of the protocol.}
    \item {\em For each device component, the input corresponding to the choice of measurement basis, and the resulting output, is communicated only with the party holding that component.} \label{leakageAssumption}
\end{enumerate}

The first assumption is for the sake of achieving everlastingly secure OT. The second assumption is for achieving a form of device-independence that is more tolerant of communication between the device components. We believe that these first two assumptions are reasonable, as the ability to store qubits is very difficult with current technology, and it is standard to assume that the LWE problem is quantum-computationally hard. The third assumption is made for the sake of the security analysis. In our context, the fourth assumption\ifdefined\TOGGLE, discussed in greater detail in \Cref{section:DIOT}, \fi is required for retaining everlasting security. That is, because the sender and receiver do not trust each other, allowing arbitrary communication between the two components of the device makes it possible to honestly execute the protocol but use leaked information to break security at a later time. Assumption \ref{leakageAssumption} remedies this problem while being less strict than the usual non-communication assumption.\footnote{A similar problem arises in \cite{MDCA21} where the single-device self-testing protocol, which allows arbitrary quantum communication between the device components, is used to obtain a device-independent quantum key distribution (DIQKD) protocol; the problem in the DIQKD setting is that if an eavesdropper can access information sent via the communication channel, then the device could use the channel to signal to the eavesdropper. To remedy this, the authors of \cite{MDCA21} assume that the eavesdropper cannot access the communication channel.} To realize this assumption, the device components could initially be given to the sender so that they can verify the validity of the assumption; note that initially giving the components to the sender will already be required so that they can estimate the device's winning probability (see the end of \Cref{subsubsection:ModSelfTest}). When the receiver's component is returned to them, the protocol can proceed with the requirement that the receiver shields their component (as opposed to requiring that both components be shielded).

\paragraph{Methods.}

We show how to modify the Rand 1-2 OT\textsuperscript{$\ell$} protocol from \cite{DFR+07} that was proved to be secure in the BQSM and the single-device self-testing protocol for a computationally bounded device in \cite{MDCA21} to ensure their compatibility with each other. Then, in our main technical contribution, we show how the two modified protocols can be combined to create a device-independent protocol for Rand 1-2 OT\textsuperscript{$\ell$}. It should be noted that the reason the single-device self-testing protocol must be modified is that, in its form in \cite{MDCA21}, it assumes that the two parties (Alice and Bob) trust each other, and thus they can work together to test and verify the untrusted device. While this is suitable for the application to device-independent quantum key distribution (DIQKD) in \cite{MDCA21} where the two parties trust each other, it is not suitable for the setting of OT, where the sender and receiver distrust each other and the device.

\ifdefined\TOGGLE

\paragraph{Future directions.}
Proving security in the non-IID setting is an important next step. Making the IID assumption, as done in this paper, significantly simplifies the analysis, though at the cost of weakening the meaning of device-independence; for true device-independence, we must prove security in the non-IID setting. The technique of reducing the analysis of the non-IID setting to the IID setting via the entropy accumulation theorem seems like a promising approach for achieving such a task. For more on this technique, the reader is referred to \cite{DFR20, ADF+18, Arn18, ARV19}.

\paragraph{Acknowledgments.} This work is supported by Canada’s NSERC, the US Air Force Office of Scientific Research under award number FA9550-20-1-0375, and the University of Ottawa’s Research Chairs program.

\paragraph{Organization.} This paper is organized in the following manner. In \Cref{section:prelim}, definitions and technical results are given. In \Cref{section:buildingBlocks}, we introduce the two building blocks that will later be used to construct our DI Rand 1-2 OT\textsuperscript{$\ell$} protocol. In \Cref{section:DIOT}, we present \Cref{protocol:DIOTsenderVerifier}, our DI Rand 1-2 OT\textsuperscript{$\ell$} protocol, and analyze its security.

\section{Preliminaries}
\label{section:prelim}

\subsection{Notation and basics}
\label{subsection:notationBasics}

A function $n: \mathbb{N} \rightarrow \mathbb{R}_+$ is said to be {\em negligible} if $\lim_{\alpha \to \infty}n(\alpha)p(\alpha) = 0$ for any polynomial $p$. We denote an arbitrary negligible function by $\textsf{negl}(\alpha)$.

An arbitrary finite-dimensional Hilbert space is denoted as $\mathcal{H}$. A Hilbert space with complex dimension $d$ is denoted as $\mathcal{H}_d$. A density operator $\rho$ on $\mathcal{H}$ is a positive semidefinite operator on $\mathcal{H}$ with trace one, $\tr(\rho)=1$. The set of all density operators on $\mathcal{H}$ is denoted as $\mathcal{D}(\mathcal{H})$.

The \texttt{Computational} basis for $\mathcal{H}_2$ is denoted by the pair $\{\ket{0}, \ket{1}\}$. The \texttt{Hadamard} basis is defined as the pair consisting of $\ket{+}\coloneqq\frac{1}{\sqrt{2}}(\ket{0}+\ket{1})$ and $\ket{-}\coloneqq\frac{1}{\sqrt{2}}(\ket{0}-\ket{1})$. When we want to choose between the \texttt{Computational} and \texttt{Hadamard} basis  according to a bit $c \in \{0,1\}$, we shall write $[\texttt{Computational, Hadamard}]_c$ to mean
\begin{equation*}
    [\texttt{Computational, Hadamard}]_c =
    \begin{cases}
        \texttt{Computational}, &\text{if } c=0\\
        \texttt{Hadamard}, &\text{if } c=1.
    \end{cases}
\end{equation*}

With bits $v^\alpha,v^\beta \in \{0,1\}$, we can denote the four Bell states (also called Bell pairs) by
\begin{equation}
\label{bellPairs}
    \ket{\phi^{(v^\alpha, \, v^\beta)}} = (\sigma^{v^\alpha}_Z \, \sigma^{v^\beta}_X \otimes \mathds{1}) \frac{\ket{00}+\ket{11}}{\sqrt{2}},
\end{equation}
where
\begin{equation*}
    \sigma_X =
    \begin{pmatrix}
    0 & 1\\
    1 & 0
    \end{pmatrix}
    \quad \text{ and } \quad
    \sigma_Z =
    \begin{pmatrix}
    1 & 0\\
    0 & -1
    \end{pmatrix}
\end{equation*}
are the single-qubit Pauli operators. If $x$ and $y$ are the bases used to measure, respectively, the first and second qubit of the state $\ket{\phi^{(v^\alpha, \, v^\beta)}}$, with respective outcomes $a$ and $b$, then the following holds:
\begin{equation}
\label{equalBits}
\begin{cases}
    a \oplus b = v^\beta, \quad \text{if } x = y =\texttt{Computational}\\
    a \oplus b = v^\alpha, \quad \text{if } x = y =\texttt{Hadamard}
\end{cases}
\end{equation}

Denote by $\{Q^a_x\}_{a \in \{0,1\}}$ the single-qubit measurement in the basis $x$ (that is, for \newline $x=\texttt{Hadamard}$, $Q^0_x=\ket{+}\bra{+}$ and $Q^1_x = \ket{-}\bra{-}$).

The trace distance between two density operators $\rho,\sigma$ is defined by $D(\rho,\sigma) \coloneqq \frac{1}{2}\tr|\rho-\sigma|$, where we define $|A| \coloneqq \sqrt{A^*A}$ to be the positive square root. The notation $\rho \approx_\varepsilon \sigma$ means that $\rho$ and $\sigma$ are $\varepsilon$-close: $D(\rho,\sigma) \leq \varepsilon$.

A quantum state in a register $E$ is fully described by its density matrix $\rho_E$. In some cases, a quantum state may depend on some classical random variable $X$ in the sense that it is described by the density matrix $\rho^x_E$ if and only if $X=x$. If an observer only has access to $E$, but not $X$, then the state is determined by the density matrix
\begin{equation*}
    \sum_x P_X(x) \rho^x_E.
\end{equation*}
The joint state, consisting of both the classical $X$ and quantum register $E$, is called a \emph{cq-state} and is described by the density matrix
\begin{equation*}
    \sum_x P_X(x) \ket{x}\bra{x} \otimes \rho^x_E.
\end{equation*}
The following notation is then used:
\begin{equation*}
    \rho_{XE} = \sum_x P_X(x) \ket{x}\bra{x} \otimes \rho^x_E
    \quad \text{ and } \quad
    \rho_E = \tr_X (\rho_{XE}) = \sum_x P_X(x)\rho^x_E.
\end{equation*}
The quantum representation of a classical random variable $X$ is $\rho_X = \Sigma_x P_X(x) \ket{x}\bra{x}$. The above notation extends to quantum states that depend on more than one classical random variable, leading to \emph{ccq-states}, \emph{cccq-states}, and so on.

Note that $\rho_{XE} = \rho_X \otimes \rho_E$ holds if and only if the quantum part is independent of~$X$. Additionally, if $\rho_{XE}$ and $\rho_X \otimes \rho_E$ are $\varepsilon$-close in terms of trace distance, then the real system~$\rho_{XE}$ behaves as the ideal system $\rho_X \otimes \rho_E$ except with probability $\varepsilon$.


\subsection{Smooth R\'{e}nyi entropy}
\label{subsection:entropy}

\begin{defn}[{\bf R\'{e}nyi entropy} \cite{Ren61}]
Let $P$ be a probability distribution over a finite set $\mathcal{X}$ and $\alpha \in [0, \infty]$. Then the {\em R\'{e}nyi entropy} of order $\alpha$ is defined as
\begin{equation*}
    H_{\alpha}(P) = -\log\bigg(\big(\sum_{x \in \mathcal{X}} P(x)^\alpha\big)^{\frac{1}{\alpha-1}}\bigg)
\end{equation*}
\end{defn}

In the limit $\alpha \rightarrow \infty$, we get the {\em min-entropy} $H_{\infty}(P)=-\log(\max_{x \in \mathcal{X}}P(x))$ and in the limit $\alpha \rightarrow 0$, we get the {\em max-entropy} $H_0(P) = \log |\{x \in \mathcal{X} : P(x)>0\}|$.

Now, for the case where we have a random variable $X$ with probability distribution $P_X$, we introduce a slight abuse of notation by writing $H_{\alpha}(X)$ instead of $H_{\alpha}(P_X)$. Keeping this in mind, the conditional min-entropy and max-entropy is, respectively, defined as
\begin{align*}
    H_{\infty}(X|Y) &\coloneqq \min_{y} H_{\infty}(X|Y=y) = \min_{x,y} \bigg(-\log P_{X|Y=y}(x)\bigg)\\
    H_0(X|Y) &\coloneqq \max_y H_0(X|Y=y) = \max_y \log|\{x \in \mathcal{X} : P_{X|Y=y}(x)>0\}|
\end{align*}

Related to R\'{e}nyi entropy is the notion of {\em smooth R\'{e}nyi entropy}, which was introduced in \cite{RW04}. In \cite{Ren05} and \cite{RW05}, smooth R\'{e}nyi entropy is used to generalize the notion of conditional min- and max-entropy to {\em conditional smooth min-} and {\em max-entropy} which are, respectively, defined as
\begin{align*}
    H^{\varepsilon}_{\infty}(X|Y) &\coloneqq \max_\mathcal{E} \min_{x,y} \bigg(-\log \bigg(\frac{P_{XY\mathcal{E}}(x,y)}{P_Y(y)}\bigg)\bigg)\\
    H^{\varepsilon}_0(X|Y) &\coloneqq \min_{\mathcal{E}} \max_y \log |\{x \in \mathcal{X} : \frac{P_{XY\mathcal{E}}(x,y)}{P_Y(y)}>0\}|
\end{align*}
where the maximum/minimum ranges over all events $\mathcal{E}$ with probability $\text{Pr}[\mathcal{E}] \geq 1-\varepsilon$, and $P_{XY\mathcal{E}}(x,y)$ is the probability that $\mathcal{E}$ occurs and $X,Y$ take values $x,y$, respectively.

We now list some results which we shall need to use in later sections.

\begin{lem}[{\bf Chain rule} \cite{RW05}]
\label{chainRule}
\begin{equation*}
    H^{\varepsilon+\varepsilon'}_{\infty}(X|Y) > H^{\varepsilon}_{\infty}(XY)- H_0(Y) - \log\bigg(\frac{1}{\varepsilon'}\bigg)
\end{equation*}
for all $\varepsilon,\varepsilon' >0$.
\end{lem}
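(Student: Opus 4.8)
The plan is to read off the chain rule directly from the event‑based definitions of smooth min‑entropy stated above, using one extra smoothing event that discards the outcomes of $Y$ whose marginal probability is too small. Write $r \coloneqq H^{\varepsilon}_{\infty}(XY)$. Since $H^{\varepsilon}_{\infty}(XY)$ is a maximum over events of probability at least $1-\varepsilon$, there is an event $\mathcal{E}$ with $\Pr[\mathcal{E}] \geq 1-\varepsilon$ and $P_{XY\mathcal{E}}(x,y) \leq 2^{-r}$ for every pair $(x,y)$. The only thing that can spoil the passage from $H_\infty(XY)$ to $H_\infty(X|Y)$ is the division by $P_Y(y)$ appearing in the conditional quantity: if some $y \in \supp(P_Y)$ has tiny probability, then $P_{XY\mathcal{E}}(x,y)/P_Y(y)$ can be large even when $P_{XY\mathcal{E}}(x,y)$ is small. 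The remedy is to remove those $y$, paying for it with a small amount of additional smoothing.

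Concretely, put $S \coloneqq \supp(P_Y)$, so that $|S| = 2^{H_0(Y)}$, and let $\mathcal{F}$ be the event ``$P_Y(Y) \geq \varepsilon'\, 2^{-H_0(Y)}$''. Then
\[
  \Pr[\neg\mathcal{F}] \;=\; \sum_{\substack{y \in S \\ P_Y(y) < \varepsilon' 2^{-H_0(Y)}}} P_Y(y) \;<\; |S|\cdot \varepsilon'\, 2^{-H_0(Y)} \;=\; \varepsilon' .
\]
Set $\mathcal{E}' \coloneqq \mathcal{E} \cap \mathcal{F}$. By the union bound $\Pr[\mathcal{E}'] \geq 1 - \Pr[\neg\mathcal{E}] - \Pr[\neg\mathcal{F}] > 1 - \varepsilon - \varepsilon'$, so $\mathcal{E}'$ is an admissible smoothing event for $H^{\varepsilon+\varepsilon'}_{\infty}(X|Y)$. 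Now fix $(x,y)$: if $P_{XY\mathcal{E}'}(x,y) = 0$ the corresponding term of the minimum is $+\infty$ and is harmless; otherwise $\mathcal{F}$ must hold for this $y$, hence $P_Y(y) \geq \varepsilon'\, 2^{-H_0(Y)}$, while $P_{XY\mathcal{E}'}(x,y) \leq P_{XY\mathcal{E}}(x,y) \leq 2^{-r}$ since $\mathcal{E}' \subseteq \mathcal{E}$. Therefore
\[
  \frac{P_{XY\mathcal{E}'}(x,y)}{P_Y(y)} \;\leq\; \frac{2^{-r}}{\varepsilon'\, 2^{-H_0(Y)}} \;=\; 2^{-(r - H_0(Y) - \log(1/\varepsilon'))} .
\]

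Taking $-\log$ and minimizing over $(x,y)$ yields $\min_{x,y}\bigl(-\log(P_{XY\mathcal{E}'}(x,y)/P_Y(y))\bigr) \geq r - H_0(Y) - \log(1/\varepsilon')$, and since $H^{\varepsilon+\varepsilon'}_{\infty}(X|Y)$ is the maximum of this expression over admissible events, we get $H^{\varepsilon+\varepsilon'}_{\infty}(X|Y) \geq H^{\varepsilon}_{\infty}(XY) - H_0(Y) - \log(1/\varepsilon')$; the strict inequality in the statement is the minor refinement coming from the fact that $\Pr[\neg\mathcal{F}] < \varepsilon'$ holds strictly (so $\mathcal{E}'$ has probability strictly above $1-(\varepsilon+\varepsilon')$), and I would simply cite \cite{RW05} for it rather than chase the leftover slack. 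There is no deep obstacle here — every step is a union bound or a one‑line inequality, and the degenerate cases $P_Y(y)=0$ or $P_{XY\mathcal{E}'}(x,y)=0$ are absorbed by the convention that such terms contribute $+\infty$ to the minimum. The one point that genuinely requires thought is the choice of threshold $\varepsilon'\,2^{-H_0(Y)}$ in the definition of $\mathcal{F}$: it is exactly the value that balances the probability lost by discarding light outcomes (which scales like $|S| = 2^{H_0(Y)}$ times the threshold) against the entropy lost in the ratio bound (which is $\log$ of the reciprocal of the threshold), and getting any other constant there would either make $\mathcal{E}'$ inadmissible or weaken the bound.
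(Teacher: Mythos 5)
The paper does not actually prove this lemma; it cites \cite{RW05} and uses the statement as a black box, so there is no in-paper argument for you to match. That said, your derivation from the event-based definitions of smooth min-entropy recalled in \Cref{subsection:entropy} is correct and is essentially the standard argument: take an event $\mathcal{E}$ witnessing $H^{\varepsilon}_{\infty}(XY)$, intersect it with $\mathcal{F} = \{P_Y(Y) \geq \varepsilon'\,2^{-H_0(Y)}\}$ to cut off the rare $Y$-outcomes that would blow up the conditional ratio, check via a union bound over $\supp(P_Y)$ that $\Pr[\neg\mathcal{F}] < \varepsilon'$, and then bound $P_{XY\mathcal{E}'}(x,y)/P_Y(y)$ pointwise. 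Your identification of $\varepsilon'\,2^{-H_0(Y)}$ as the threshold that optimizes the tradeoff is also accurate. The one loose end is the one you flag yourself: your argument cleanly delivers $\geq$ rather than the strict $>$ in the statement, and the fact that $\Pr[\neg\mathcal{F}]$ falls strictly below $\varepsilon'$ does not by itself upgrade $\geq$ to $>$ --- that extra slack lives in the admissibility margin of $\mathcal{E}'$, not in the pointwise bound on $P_{XY\mathcal{E}'}(x,y)/P_Y(y)$, so converting it into a strict entropy gap would need a separate perturbation argument. Since every use of \Cref{chainRule} in this paper (in \Cref{baseSenderSecurity} and \Cref{DIsenderSecurity}) only needs the one-sided estimate, this is cosmetic, and deferring the strict version to \cite{RW05} is a reasonable call.
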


The following lemma is a special case of a more general entropic uncertainty relation introduced in \cite{DFR+07}.

\begin{lem}[{\cite{DFR+07}}]
\label{entropicUncertainty}
Let $\rho \in \mathcal{D}(\mathcal{H}^{\otimes n}_2)$ be an arbitrary $n$-qubit quantum state. Let $\Theta=(\Theta_1, \ldots, \Theta_n)$ be uniformly distributed over $\{\texttt{Computational, Hadamard}\}^n$, and let $X=(X_1, \ldots X_n)$ be the outcome when measuring $\rho$ in basis $\Theta$. Then for any $0 < \lambda < 1/2$,
\begin{equation*}
    H^{\varepsilon}_{\infty}(X|\Theta) \geq \bigg(\frac{1}{2}-2\lambda\bigg)n
\end{equation*}
where $\varepsilon=\exp(-\frac{\lambda^2 n}{32(2-\log(\lambda))^2})$.
\end{lem}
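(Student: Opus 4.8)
The plan is to reduce the $n$-qubit statement to a single-qubit entropic uncertainty relation, which is then ``tensorized'' by a martingale (Azuma-type) concentration argument in which the measurement outcomes are revealed one qubit at a time. The one-shot ingredient is: for any single-qubit state $\sigma$, measuring $\sigma$ in a uniformly random basis from $\{\texttt{Computational},\texttt{Hadamard}\}$ yields an outcome whose Shannon entropy, averaged over the two bases, is at least $\tfrac{1}{2}$; equivalently $H(\,\cdot\,|\,\texttt{Computational}) + H(\,\cdot\,|\,\texttt{Hadamard}) \ge 1$. This is Maassen--Uffink for the pair of mutually unbiased bases on a qubit (overlap constant $1/\sqrt{2}$), and it also follows from a one-line Bloch-sphere computation using $r_X^2+r_Z^2\le 1$; the bound is attained at $\ket{0},\ket{1},\ket{+},\ket{-}$, which is precisely why the constant $\tfrac{1}{2}$ (and not something larger) appears in the lemma.

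Next I would unfold the global measurement sequentially. Because measurements on distinct qubits commute, $P_{X|\Theta=\theta}(x|\theta)$ equals the probability of obtaining $x_1,\dots,x_n$ when the qubits are measured one after another in bases $\theta_1,\dots,\theta_n$; writing $p_i$ for the conditional probability of the $i$-th outcome given the earlier outcomes and bases, the probability chain rule yields $-\log P_{X|\Theta}(X|\Theta)=\sum_{i=1}^n(-\log p_i)$. Conditioned on $\Theta_{<i}$ and $X_{<i}$, the reduced state of qubit $i$ is a fixed single-qubit state, the fresh basis $\Theta_i$ is uniform and independent of the past, so the one-shot relation gives $\mathbb{E}[\,{-\log p_i}\mid \Theta_{<i},X_{<i}\,]\ge\tfrac{1}{2}$ (the expectation integrating over the new basis $\Theta_i$ and outcome $X_i$). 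Hence the $-\log p_i$ form a sequence of nonnegative random variables with conditional mean at least $\tfrac{1}{2}$, and the task is to lower-bound their sum with high probability.

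The obstruction is that $-\log p_i$ is unbounded, so Azuma's inequality does not apply directly; I would fix this by truncating, $Y_i:=\min\{-\log p_i,\,T\}$ with $T\approx 2-\log\lambda$ (so that $2^{-T}$ is of order $\lambda$). A short estimate — using that $X_i$ is a single bit, so at most two outcomes can have $p_i<2^{-T}$, together with the fact that $p\log(1/p)$ is tiny for small $p$ — shows the truncation costs only $O(2^{-T})=O(\lambda)$ in conditional mean, so $\mathbb{E}[\,Y_i\mid\Theta_{<i},X_{<i}\,]\ge\tfrac{1}{2}-O(\lambda)=:\mu$. Then $M_k:=\sum_{i\le k}(Y_i-\mu)$ is a submartingale with $M_0=0$ and increments bounded by $T$, so Azuma gives $\Pr[M_n\le -t]\le\exp(-t^2/(2nT^2))$. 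Choosing $t$ of order $\lambda n$ — small enough that $M_n>-t$ forces $\sum_i Y_i>(1/2-2\lambda)n$, hence $-\log P_{X|\Theta}(X|\Theta)>(1/2-2\lambda)n$ — gives a failure probability $\exp(-c\,\lambda^2 n/(2-\log\lambda)^2)$ for an absolute constant $c>1/32$, hence at most the stated $\varepsilon$ (the constant $32$ being far from optimal).

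To finish, I would convert this tail bound into the smooth-min-entropy statement using the definition in the excerpt. Take $\mathcal{E}$ to be the event $\{(\theta,x):P_{X|\Theta=\theta}(x|\theta)<2^{-(1/2-2\lambda)n}\}$; the previous step says $\Pr[\mathcal{E}]\ge 1-\varepsilon$, so $\mathcal{E}$ is admissible in $H^{\varepsilon}_\infty(X|\Theta)=\max_{\mathcal{E}}\min_{x,\theta}(-\log(P_{X\Theta\mathcal{E}}(x,\theta)/P_\Theta(\theta)))$. For $(x,\theta)\in\mathcal{E}$ the ratio $P_{X\Theta\mathcal{E}}(x,\theta)/P_\Theta(\theta)$ equals $P_{X|\Theta=\theta}(x|\theta)<2^{-(1/2-2\lambda)n}$, and for $(x,\theta)\notin\mathcal{E}$ it is $0$, so the inner minimum is at least $(1/2-2\lambda)n$, which is the claim. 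The single genuinely delicate point is the truncation in the third step: $T$ must be chosen so that the conditional drift stays essentially $\tfrac{1}{2}$ while the increments remain small enough for Azuma to recover the exact exponent, and the choice $T\approx 2-\log\lambda$ is exactly what produces the $(2-\log\lambda)^2$ appearing in $\varepsilon$.
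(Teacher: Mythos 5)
The paper does not actually prove this lemma; it simply cites it from \cite{DFR+07}, noting that it is a special case of the more general entropic uncertainty relation established there. So there is no in-paper proof to compare against, and your proposal has to be judged on its own merits and against the argument in \cite{DFR+07}.

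Your argument is correct and is, in substance, the known route: rewrite the global measurement sequentially via the chain rule for probabilities (licit because the single-qubit measurements commute, so the marginal law of $X_{\le i}$ depends only on $\Theta_{\le i}$), use the single-qubit Maassen--Uffink relation for the two mutually unbiased bases to obtain $\mathbb{E}[\,{-\log p_i}\mid\Theta_{<i},X_{<i}\,]\ge\tfrac12$, truncate at $T\approx 2-\log\lambda$ so that the increments become bounded at a cost of $O(2^{-T})=O(\lambda)$ in drift, and apply Azuma. The truncation level is exactly what produces the $(2-\log\lambda)^2$ in the exponent, and you correctly note that your constant beats $1/32$, so the stated $\varepsilon$ is an upper bound. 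The last step, passing to $H^\varepsilon_\infty(X|\Theta)$ via the event $\mathcal{E}=\{(\theta,x):P_{X|\Theta=\theta}(x|\theta)<2^{-(1/2-2\lambda)n}\}$, is also right under the paper's definition: on $\mathcal{E}$ the inner quantity equals $-\log P_{X|\Theta=\theta}(x|\theta)>(1/2-2\lambda)n$, and off $\mathcal{E}$ the numerator vanishes, so the event does not lower the min.

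Two points to tighten in a full write-up, neither of which is a gap: (i) ``at most two outcomes can have $p_i<2^{-T}$'' should read ``at most one per basis'' (for a fixed $\theta_i$ the two outcome probabilities sum to $1$, so for $T\ge 1$ at most one lies below $2^{-T}$); this only improves the constant. (ii) The one-shot Maassen--Uffink bound is usually stated for pure states, whereas the reduced state of qubit $i$ conditioned on $\Theta_{<i},X_{<i}$ is generically mixed; you should invoke concavity of Shannon entropy (or equivalently linearity of the measurement statistics in $\sigma$ plus concavity) to extend it, which is immediate but worth saying.
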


The following lemma is presented as a corollary in \cite{DFR+07} to the Min-Entropy-Splitting lemma---which essentially states that if the joint entropy of two random variables is large, then at least one of the random variables must have at least half of the original entropy in a randomized sense.

\begin{lem}[\cite{DFR+07}]
\label{entropySplit}
Let $\varepsilon \geq 0$, and let $X_0, X_1, Z$ be random variables (over possibly different alphabets) such that $H^{\varepsilon}_{\infty}(X_0 X_1 | Z) \geq \alpha$. Then, there exists a binary random variable~$C$ over $\{0,1\}$ such that $H^{\varepsilon+\varepsilon'}_{\infty}(X_{1-C}|ZC) \geq \alpha/2-1-\log(1/\varepsilon')$ for any $\varepsilon' > 0$.
\end{lem}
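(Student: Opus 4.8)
The plan is to extract the combinatorial core of the Min-Entropy-Splitting Lemma of \cite{DFR+07}---a simple threshold construction for $C$---and then wrap the side information $Z$ and the smoothing parameter around it. For the core, suppose momentarily that $\varepsilon = 0$ and that $Z$ is absent, so $P_{X_0X_1}(x_0,x_1) \leq 2^{-\alpha}$ for all $x_0,x_1$. I would let $C$ be the deterministic function of $X_0$ given by $C = 1$ iff $P_{X_0}(X_0) \leq 2^{-\alpha/2}$, the point being that $C$ flags the ``light'' values of $X_0$---on which $X_{1-C} = X_0$ is already spread out---while on the complementary event one instead controls $X_{1-C} = X_1$. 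That $H_\infty(X_{1-C}C) \geq \alpha/2$, i.e.\ $P_{X_{1-C}C}(x,c) \leq 2^{-\alpha/2}$, then splits into two cases: for $c = 1$, $P_{X_0C}(x_0,1)$ is either $P_{X_0}(x_0) \leq 2^{-\alpha/2}$ or $0$; for $c = 0$, $P_{X_1C}(x_1,0) = \sum_{x_0 : P_{X_0}(x_0) > 2^{-\alpha/2}} P_{X_0X_1}(x_0,x_1)$ is a sum of fewer than $2^{\alpha/2}$ terms---otherwise the marginal of $X_0$ would exceed $1$---each at most $2^{-\alpha}$, hence at most $2^{-\alpha/2}$. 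This counting step is the only real content.

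To reinstate $Z$ and the smoothing, I would unfold $H^\varepsilon_\infty(X_0X_1|Z) \geq \alpha$ into the existence of an event $\mathcal{E}$ with $\Pr[\mathcal{E}] \geq 1 - \varepsilon$ and $P_{X_0X_1Z\mathcal{E}}(x_0,x_1,z)/P_Z(z) \leq 2^{-\alpha}$ for all $x_0,x_1,z$. Then I would run the threshold construction inside each value $z$, but relative to the sub-normalized quantities $Q_z(x_0) \coloneqq \sum_{x_1} P_{X_0X_1Z\mathcal{E}}(x_0,x_1,z)/P_Z(z)$---setting $C = 1$ iff $Q_Z(X_0) \leq 2^{-\alpha/2}$. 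Since $\sum_{x_0} Q_z(x_0) = \Pr[\mathcal{E} \mid Z = z] \leq 1$, the same two-case estimate goes through verbatim and yields $P_{X_{1-C}CZ\mathcal{E}}(x,c,z)/P_Z(z) \leq 2^{-\alpha/2}$; that is, the \emph{same} event $\mathcal{E}$ witnesses $H^\varepsilon_\infty(X_{1-C}C|Z) \geq \alpha/2$.

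The last step is to move $C$ onto the conditioning side, which is where the remaining $\varepsilon'$ and the $-1$ come from. Applying the chain rule (\Cref{chainRule}) in its form conditioned on $Z$, with $X = X_{1-C}$ and $Y = C$, and using $H_0(C) \leq 1$, one obtains $H^{\varepsilon+\varepsilon'}_\infty(X_{1-C}|ZC) \geq \alpha/2 - 1 - \log(1/\varepsilon')$; concretely this amounts to intersecting the witnessing event with $\mathcal{F} = \{\, P_{CZ}(C,Z) \geq \tfrac{\varepsilon'}{2} P_Z(Z)\,\}$, whose complement has probability $< \varepsilon'$ because $C$ is binary, and observing that on $\mathcal{F}$ one may replace the denominator $P_Z(z)$ by $P_{CZ}(c,z)$ at the cost of a factor $2/\varepsilon'$, i.e.\ of $1 + \log(1/\varepsilon')$ bits. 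I expect the main obstacle to be not any single inequality but keeping the smoothing event and the sub-normalized conditional distributions consistent through these two reductions---in particular, checking that one fixed event really does survive the per-$z$ splitting as a valid witness, and correctly accounting for the extra $\varepsilon'$ and the $H_0(C)$ loss at the end.
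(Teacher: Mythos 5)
Your plan is correct and reconstructs essentially the argument behind the Min-Entropy-Splitting Lemma and its corollary in \cite{DFR+07}, which this paper cites without re-proving: the threshold definition of $C$ flagging the heavy $X_0$-values, run per value of $z$ on the sub-normalized distribution inside the smoothing event $\mathcal{E}$, followed by the chain rule to move $C$ across the conditioning bar at a cost of $H_0(C)+\log(1/\varepsilon')\le 1+\log(1/\varepsilon')$. The one point worth flagging is that the paper's \Cref{chainRule} is stated without conditioning on $Z$, so you need either its conditional form (which \cite{RW05} provides) or the direct $\mathcal{F}$-event argument you sketch at the end.
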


Let $\mathcal{F}$ be a class of hash functions from $\{0,1\}^n$ to $\{0,1\}^\ell$. The class $\mathcal{F}$ is said to be {\em two-universal} \cite{CW79} if, for any distinct $x,x' \in \{0,1\}^n$ and for~$F$ uniformly distributed over~$\mathcal{F}$, it holds that $\text{Pr}[F(x)=F(x')] \leq 1/2^\ell$.

\begin{theorem}[{\bf Privacy amplification} \cite{DFR+07}]
\label{privacyAmp}
Let $\varepsilon \geq 0$. Let $\rho_{XUE}$ be a ccq-state, where X takes values in $\{0,1\}^n$, U in the finite domain $\mathcal{U}$ and register $E$ contains $q$ qubits. Let $F$ be the random and independent choice of a member of a two-universal class of hash functions $\mathcal{F}$ from $\{0,1\}^n$ into $\{0,1\}^{\ell}$. Then
\begin{equation*}
    D(\rho_{F(X)FUE}, \tfrac{1}{2^{\ell}}\mathds{1} \otimes \rho_{FUE}) \leq \frac{1}{2} 2^{-\frac{1}{2}(H^{\varepsilon}_{\infty}(X|U)-q-\ell)}+2\varepsilon.
\end{equation*}
\end{theorem}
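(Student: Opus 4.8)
The statement to prove is the Privacy Amplification theorem: for a ccq-state $\rho_{XUE}$ with $X \in \{0,1\}^n$, $U$ in finite domain $\mathcal{U}$, $E$ containing $q$ qubits, and $F$ a random two-universal hash function from $\{0,1\}^n$ to $\{0,1\}^\ell$, we want $D(\rho_{F(X)FUE}, \frac{1}{2^\ell}\mathds{1} \otimes \rho_{FUE}) \leq \frac{1}{2} 2^{-\frac{1}{2}(H^\varepsilon_\infty(X|U) - q - \ell)} + 2\varepsilon$.

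This theorem is from Renner's work on privacy amplification against quantum adversaries, and my plan would be to derive it from the known non-smooth version. First I would recall (or cite) the version of quantum privacy amplification for the (non-smooth) conditional min-entropy: for a cq-state $\rho_{XE'}$ (here $E'$ should be understood as combining the quantum register $E$ of $q$ qubits with the classical register $U$), one has $D(\rho_{F(X)FE'}, \frac{1}{2^\ell}\mathds{1}\otimes\rho_{FE'}) \leq \frac{1}{2}2^{-\frac{1}{2}(H_\infty(X|E') - \ell)}$. The key point needed to handle the classical $U$ correctly is that including a classical register only costs its logarithm of alphabet size — but actually the cleaner route is to absorb $U$ directly: condition on $U=u$, apply the quantum leftover hash lemma for each value $u$, and average. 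One must be careful that the quantum register $E$ has $q$ qubits so its "size" contributes a $2^q$ factor; more precisely the min-entropy of the fully mixed state relative to $E$ gives $H_\infty(X|E) \geq H_\infty(X) - q$ in the worst case, which is where the $-q$ term originates.

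The main technical step — and the place I would expect the real work to be — is the passage from the non-smooth min-entropy to the \emph{smooth} min-entropy $H^\varepsilon_\infty(X|U)$, picking up the additive $+2\varepsilon$. The plan here is: by definition of smooth min-entropy, there is an event $\mathcal{E}$ with $\Pr[\mathcal{E}] \geq 1-\varepsilon$ achieving (up to the max) the bound $\min_{x,u}(-\log(P_{XU\mathcal{E}}(x,u)/P_U(u))) = H^\varepsilon_\infty(X|U)$. This event defines a sub-normalized state $\tilde\rho_{XUE}$ (the part of $\rho$ where $\mathcal{E}$ holds), which is $\varepsilon$-close in trace distance to $\rho_{XUE}$ because $\Pr[\mathcal{E}^c] \leq \varepsilon$. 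Applying the hash function and tracing is trace-distance non-increasing, so $\rho_{F(X)FUE} \approx_\varepsilon \tilde\rho_{F(X)FUE}$ and likewise $\frac{1}{2^\ell}\mathds{1}\otimes\rho_{FUE} \approx_\varepsilon \frac{1}{2^\ell}\mathds{1}\otimes\tilde\rho_{FUE}$; two applications of the triangle inequality then contribute $2\varepsilon$ total, reducing the problem to bounding $D(\tilde\rho_{F(X)FUE}, \frac{1}{2^\ell}\mathds{1}\otimes\tilde\rho_{FUE})$ by the non-smooth leftover hash bound applied to $\tilde\rho$, whose (non-smooth) conditional min-entropy of $X$ given $U$ is at least $H^\varepsilon_\infty(X|U)$ by construction of $\mathcal{E}$.

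Finally I would combine: the non-smooth quantum leftover hash lemma applied to $\tilde\rho_{XUE}$ gives a bound of the form $\frac{1}{2}2^{-\frac{1}{2}(H_\infty(X|U)_{\tilde\rho} - q - \ell)} \leq \frac{1}{2}2^{-\frac{1}{2}(H^\varepsilon_\infty(X|U) - q - \ell)}$, using monotonicity of $t \mapsto 2^{-t/2}$ and the $-q$ accounting for the $q$ qubits in $E$; adding the $2\varepsilon$ from smoothing yields the claimed inequality. The subtle bookkeeping points to watch are: (i) correctly justifying that conditioning on the classical $U$ and averaging reproduces the min-entropy $H_\infty(X|U)$ with no extra loss beyond the $2^q$ from $E$, and (ii) ensuring the sub-normalization of $\tilde\rho$ does not break the leftover hash estimate — standard since trace distance and the relevant norms behave monotonically under the (completely positive, trace-non-increasing) operations involved. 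I would present this as a reduction to the cited non-smooth lemma plus a smoothing argument, rather than reproving the leftover hash lemma from scratch.
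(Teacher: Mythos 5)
The paper states this as a cited result from \cite{DFR+07} and provides no proof of its own, so there is no in-paper argument to compare yours against. Taken on its own terms, your reduction is sound and is the standard route: apply the non-smooth quantum leftover hash lemma with side-information register $UE$ to get a bound in $\frac{1}{2}2^{-\frac{1}{2}(H_\infty(X|UE)-\ell)}$; use the dimension bound $H_\infty(X|UE)\ge H_\infty(X|U)-q$ (equivalently, a chain rule for the $q$-qubit register $E$) to obtain the $-q$ penalty; and smooth by passing to the sub-normalized state $\tilde\rho_{XUE}$ obtained from an event $\mathcal{E}$ with $\Pr[\mathcal{E}]\ge 1-\varepsilon$ achieving $H^\varepsilon_\infty(X|U)$, then apply monotonicity of trace distance under the CPTP map that evaluates and records $F(X)$ together with two triangle inequalities to collect $+2\varepsilon$. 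One detail worth making explicit in a full write-up: the smooth min-entropy in \Cref{subsection:entropy} is a purely classical quantity, with the smoothing optimization ranging over events in the distribution of $(X,U)$, so the restriction to $\mathcal{E}$ automatically induces the sub-normalized cq-state to which the non-smooth leftover hash lemma (which holds for sub-normalized inputs, as its proof is linear in the input state) is applied. You have already flagged the two subtleties that matter---the sub-normalization and the origin of $-q$---so the plan closes once the non-smooth leftover hash lemma is invoked in a precise form.
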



\subsection{Extended noisy trapdoor claw-free function family}
\label{subsection:ENTCF}

In \Cref{subsection:DIcomp}, we shall introduce the single-device self-testing protocol that we will be utilizing. The underlying cryptographic primitive of this protocol is an extended noisy trapdoor claw-free function family (ENTCF family) \cite{BCM+18, Mah18}. The Learning with Errors problem \cite{Reg05} can be used to construct ENTCF families; to see this, the reader is referred to section 4 of \cite{BCM+18} and section 9 of \cite{Mah18}.

We now present an informal description, taken from \cite{MV21, Mah18}, of the properties of ENTCF families. The formal description of these properties can be found in \cite{Mah18}.

An ENTCF family consists of two families of function pairs: $\mathcal{F}$ and $\mathcal{G}$. It is assumed that both the common domain of all function pairs and the codomain are sets of all bit strings of a fixed length. A function pair $(f_{k,0}, f_{k,1})$ is indexed by a public key $k$. If $(f_{k,0},f_{k,1}) \in \mathcal{F}$, then it is called a {\em  claw-free pair}; and if $(f_{k,0},f_{k,1}) \in \mathcal{G}$, then it is called an {\em injective pair}.
\begin{enumerate}[label=(\roman*)]
    \item For a fixed $k \in \mathcal{K_F}$, $f_{k,0}$ and $f_{k,1}$ are bijections with the same image; so, for every $y$ in their image, there exists a unique pair of pre-images $(x_0,x_1)$, called a {\em claw}, such that
    \begin{equation*}
        f_{k,0}(x_0)=f_{k,1}(x_1)=y.
    \end{equation*}
    \item Given a key $k \in \mathcal{K_F}$ for a claw-free pair, it is quantum-computationally hard to find a claw for the corresponding function pair; but with access to a $y$ in the image and a trapdoor, finding the claw is computationally easy; this is the claw-free property. A stronger version of this property is the \emph{adaptive hardcore bit property}: without access to a trapdoor, it is quantum-computationally hard to simultaneously compute a preimage $x_i$ and a bit of the form $d \cdot (x_0 \oplus x_1)$ for any non-trivial bit string $d$, where $\cdot$ here is the inner product between bit strings, and where $(x_0,x_1)$ forms a valid claw.\label{adaptHardBit}
    \item For a fixed $k \in \mathcal{K_G}$, $f_{k,0}$ and $f_{k,1}$ are injective functions with disjoint images. With a trapdoor and $y=f_{k,b}(x_{b,y})$, where $b\in\{0,1\}$, one can efficiently recover $(b,x_{b,y})$.
    \item Given a key $k \in \mathcal{K_F} \cup \mathcal{K_G}$, it is quantum-computationally hard (without access to trapdoor information) to determine whether $k$ is a key for a claw-free or an injective pair (\emph{i.e.} it is computationally indistinguishable from a key from $\mathcal{K_F}$). This is known as {\em injective invariance}.\label{injInv}
    \item For every $k \in \mathcal{K_F} \cup \mathcal{K_G}$, there exists a trapdoor $t_k$ which can be sampled together with~$k$ and with which (ii) and (iv) are computationally easy.\label{trapdoorEasy}
\end{enumerate}


\section{Building Blocks}
\label{section:buildingBlocks}

In this section, we introduce the two building blocks that will be used to create our DI Rand~1-2 OT\textsuperscript{$\ell$} protocol in \Cref{section:DIOT}.

\Cref{subsection:OTinBQSM} comprises the first building block. In \Cref{subsubsection:BackgroundOT}, we describe Rand~1-2~OT\textsuperscript{$\ell$} and present a security definition that comes almost directly from \cite{DFR+07}, though it is modified to account for protocol aborts. Then in \Cref{subsubsection:ModRandOT}, we present \Cref{protocol:RandQOT}, which is our modified version of the protocol in \cite{DFR+07} for Rand 1-2 OT\textsuperscript{$\ell$}.

\Cref{subsection:DIcomp} comprises the second building block. In \Cref{subsubsection:OgSelfTest}, we present the single-device self-testing protocol from \cite{MDCA21} and outline how it works. Then in \Cref{subsubsection:ModSelfTest}, we present \Cref{protocol:testOneVerifier}, which is merely a minor modification of the single-device self-testing protocol.

\subsection{Oblivious transfer in the bounded-quantum-storage model}
\label{subsection:OTinBQSM}

\subsubsection[Background of Rand 1-2 OT]{Background of Rand 1-2 OT\textsuperscript{$\ell$}}
\label{subsubsection:BackgroundOT}

We follow \cite{DFR+07} in discussing oblivious transfer in the bounded-quantum-storage model. First, consider 1-out-of-2 oblivious transfer (1-2 OT\textsuperscript{$\ell$}) which is described in the context of two parties: the sender and the receiver. The sender sends two $\ell$-bit strings $S_0$ and $S_1$ to the receiver such that the receiver can choose which of the two strings they want to receive, but they do not get to learn anything about the other string, and the sender does not get to learn which string the receiver has chosen.

We will be concerned with Rand 1-2 OT\textsuperscript{$\ell$} which operates in the same manner as 1-2~OT\textsuperscript{$\ell$} except that the two strings $S_0$ and $S_1$ are generated uniformly at random during the protocol and output to the sender, as opposed to the sender inputting the strings. From Rand 1-2~OT\textsuperscript{$\ell$}, one can then construct 1-2 OT\textsuperscript{$\ell$} \cite{DFSS06}.

Let us now formally define what a Rand 1-2 OT\textsuperscript{$\ell$} protocol is and what security for such a protocol means. This definition (\Cref{randOT}) comes almost directly from \cite{DFR+07}; the only change we have made is the inclusion of a way to account for protocol aborts.\footnote{The authors would like to thank an anonymous reviewer for suggesting a way of handling aborts in our main protocol.} Regarding notation in this definition, an honest sender is denoted as $S$ and a dishonest sender as~$\widetilde{S}$. Similarly, $R$ for an honest receiver and $\widetilde{R}$ for a dishonest receiver. Additionally, $C$ denotes the binary random variable which describes $R$'s choice bit; $S_0, S_1$ denote the $\ell$-bit long random variables describing $S$'s output strings; $Y$ denotes the $\ell$-bit long random variable describing $R$'s output string (which should be $S_C$ when both are honest); and $Z$ denotes the binary random variable which describes whether the protocol aborts ($Z=1$ corresponds to the protocol not aborting). Given a protocol for Rand 1-2 OT\textsuperscript{$\ell$}, the overall quantum state in the case of a dishonest sender $\widetilde{S}$ is given by the ccq-state $\rho_{CY\widetilde{S}}$, and in the case of a dishonest receiver $\widetilde{R}$, the overall quantum state is given by the ccq-state $\rho_{S_0S_1\widetilde{R}}$.

\begin{defn}[{\bf Rand 1-2 OT\textsuperscript{$\ell$}}]
\label{randOT}
An $\varepsilon$-secure Rand 1-2 OT\textsuperscript{$\ell$} is a quantum protocol between a sender $S$ and a receiver $R$. The sender has no input but the receiver has input $C \in \{0,1\}$ such that for any distribution of $C$, if $S$ and $R$ follow the protocol, then $S$ gets output $S_0, S_1 \in \{0,1\}^{\ell}$ and $R$ gets $Y=S_C$, except with probability $\varepsilon$, and, with $Z \in \{0,1\}$ describing whether the protocol aborts ($Z=0$) or does not abort ($Z=1$), the following two properties hold:

\vspace{2mm}
\textbf{$\varepsilon$-Receiver-security}: If $R$ is honest, then for any $\widetilde{S}$, there exist random variables $S_0',S_1'$ such that $\text{Pr}[Y=S_C'] \geq 1-\varepsilon$ and, with $\rho^{Z=1}_{CS_0'S_1'\widetilde{S}}$ being the state when the protocol does not abort,
\begin{equation}
\label{receiverSecurity}
    \Pr(Z=1) D(\rho^{Z=1}_{C S_0' S_1' \widetilde{S}}, \rho^{Z=1}_C \otimes \rho^{Z=1}_{S_0' S_1' \widetilde{S}}) \leq \varepsilon.
\end{equation}

\textbf{$\varepsilon$-Sender-security}: If $S$ is honest, then for any $\widetilde{R}$, there exists a random variable $C'$ such that, with $\rho^{Z=1}_{S_{1-C'}S_{C'}C'\widetilde{R}}$ being the state when the protocol does not abort,
\begin{equation}
\label{senderSecurity}
    \Pr(Z=1) D\big(\rho^{Z=1}_{S_{1-C'} S_{C'} C' \widetilde{R}}, \frac{1}{2^{\ell}} \mathds{1}  \otimes \rho^{Z=1}_{S_{C'} C' \widetilde{R}}\big) \leq \varepsilon.
\end{equation}

\textbf{Completeness}: When both parties and the device are honest, the probability of aborting is small,
\begin{equation}
    \Pr(Z=0) \leq \varepsilon.
\end{equation}

Additionally, if any of the above holds for $\varepsilon=0$, then the corresponding property is said to hold perfectly. If one of the properties only holds with respect to a restricted class $\mathfrak{S}$ of $\widetilde{S}$'s (respectively $\mathfrak{R}$ of $\widetilde{R}$'s), then this property is said to hold and the protocol is said to be secure against $\mathfrak{S}$ (respectively $\mathfrak{R}$).
\end{defn}

Receiver-security says that whatever the actions of a dishonest sender, they are just as good as the following actions: cause the protocol to abort or generate the ccq-state $\rho_{S_0' S_1' \widetilde{S}}$ independently of $C$, inform $R$ of~$S'_C$, and output $\rho_{\widetilde{S}}$. Sender-security says that whatever the actions of a dishonest receiver, they are just as good as causing the protocol to abort or having a ccq-state $\rho_{S_{C'} C' \widetilde{R}}$, informing $S$ of~$S_{C'}$ and an independent uniformly distributed $S_{1-C'}$, and outputting $\rho_{\widetilde{R}}$.

A protocol satisfying \Cref{randOT} is then a secure implementation of Rand 1-2 OT\textsuperscript{$\ell$} with the exception that a dishonest sender may influence the distribution of $S_0$ and $S_1$, and a dishonest receiver may influence the distribution of the string of their choice, though as pointed out in \cite{DFR+07}, this is acceptable for a straightforward construction of the standard 1-2 OT\textsuperscript{$\ell$}.
For an explanation of why the existence of $S_0'$ and $S_1'$ in the statement of receiver-security is necessary, the reader is referred to the discussion after Definition 3 in \cite{DFR+07}.

\subsubsection[Modified Rand 1-2 OT protocol]{Modified Rand 1-2 OT\textsuperscript{$\ell$} protocol}
\label{subsubsection:ModRandOT}

Let us now present a slightly modified version of a quantum protocol for Rand 1-2 OT\textsuperscript{$\ell$} from \cite{DFR+07}. This modified protocol, given as \Cref{protocol:RandQOT} below, differs from the Rand 1-2 OT\textsuperscript{$\ell$} protocol in \cite{DFR+07} in that their protocol has the sender use conjugate coding to send quantum states to the receiver; they then show that to prove sender security, it suffices to prove sender security for an EPR-based version of their protocol, and thus do so for receivers with bounded quantum storage. Our protocol, however, explicitly uses Bell pairs; furthermore, our protocol allows for the use any of the four possible Bell pairs while the EPR-based version of the protocol in \cite{DFR+07} only needed to use the $\ket{\phi^{(0,0)}}$ Bell pair.

These modifications are minor and so the proofs in \cite{DFR+07} for perfect receiver-security and $\varepsilon$-sender-security against quantum-memory-bounded receivers largely carries over to our protocol. Nevertheless, for the sake of being explicit, we shall give the proofs here and note how our modifications affect them.

In \Cref{protocol:RandQOT}, we let $\mathcal{F}$ be a fixed two-universal class of hash functions from $\{0,1\}^n$ to $\{0,1\}^\ell$, where $\ell$ is to be determined later. Note that we can apply a function $f \in \mathcal{F}$ to a $n'$-bit string with $n' < n$ by padding it with zeros; so, for instance, if $x|_I$ is an $n'$-bit string and we write $f(x|_I)$, then it will be assumed that $x|_I$ has been padded with zeros to form an $n$-bit long string.

\begin{breakablealgorithm}
\caption{Rand 1-2 OT\textsuperscript{$\ell$} with Bell pairs}
\label{protocol:RandQOT}
\begin{algorithmic}[1]

\State
\label{protocol:RandQOT:step1}
A device prepares $n$ uniformly random Bell pairs $\ket{\phi^{(v^{\alpha}_i, \, v^{\beta}_i)}}$, $i=1, \ldots, n$, where the first qubit of each pair goes to $S$ along with the string $v^{\alpha}$, and the second qubit of each pair goes to $R$ along with the string $v^{\beta}$.

\vspace{2mm}
\State
\label{protocol:RandQOT:step2}
$R$ measures all qubits in the basis $y=[\texttt{Computational, Hadamard}]_c$ where $c$ is $R$'s choice bit. Let $b \in \{0,1\}^n$ be the outcome. $R$ then computes $b \oplus w^\beta$, where the $i$-th entry of $w^\beta$ is defined by
\begin{equation*}
    w^\beta_i \coloneqq
    \begin{cases}
        0, \quad &\text{if } y = \texttt{Hadamard}\\
        v^\beta_i, \quad &\text{if } y = \texttt{Computational}
    \end{cases}
\end{equation*}

\vspace{2mm}
\State
\label{protocol:RandQOT:step3}
$S$ picks uniformly random $x \in \{\texttt{Computational, Hadamard}\}^n$, and measures the $i$-th qubit in basis $x_i$. Let $a \in \{0,1\}^n$ be the outcome. $S$ then computes $a \oplus w^\alpha$, where the $i$-th entry of $w^\alpha$ is defined by
\begin{equation*}
    w^\alpha_i \coloneqq
    \begin{cases}
        v^\alpha_i, \quad &\text{if } x_i = \texttt{Hadamard}\\
        0, \quad &\text{if } x_i = \texttt{Computational}
    \end{cases}
\end{equation*}

\vspace{2mm}
\State
\label{protocol:RandQOT:step4}
$S$ picks two uniformly random hash functions $f_0,f_1 \in \mathcal{F}$, announces $x$ and $f_0, f_1$ to~$R$, and outputs $s_0 \coloneqq f_0 (a \oplus w^\alpha|_{I_0})$ and $s_1 \coloneqq f_1(a \oplus w^\alpha|_{I_1})$ where \newline $I_r \coloneqq \{i:x_i = [\texttt{Computational, Hadamard}]_r\}$.

\vspace{2mm}
\State
\label{protocol:RandQOT:step5}
$R$ outputs $s_c = f_c(b \oplus w^\beta|_{I_c})$.

\end{algorithmic}
\end{breakablealgorithm}

Observe that without the bounded quantum storage assumption, the receiver can easily break the security of \Cref{protocol:RandQOT} by choosing to not measure their qubits in \cref{protocol:RandQOT:step2} and instead store them until \cref{protocol:RandQOT:step4}, where the sender publishes their measurement bases. At this point, the receiver can copy the sender's measurement bases, thus allowing the receiver to learn both $s_0$ and $s_1$.

Note that there is no step in \Cref{protocol:RandQOT} where an abort can occur, and so the probability of not aborting is one, $\Pr(Z=1)=1$. Hence, in the case that both parties are honest, we easily get $\Pr(Z=0)=0$ which satisfies the completeness condition of \Cref{randOT}.

Although we use Bell pairs in our protocol, and conjugate coding is used in the protocol in \cite{DFR+07}, the intuition behind perfect receiver-security in both protocols is the same, as is the strategy for proving it: the non-interactivity of both protocols means that a dishonest sender is unable to learn the receiver's choice bit; to prove perfect receiver-security, we consider the scenario where a dishonest sender executes the protocol with a receiver that has unbounded quantum memory and thus can compute $S'_0, S'_1$. Consequently, the only modification that must be made to the proof of Proposition 4.5 (perfect receiver-security) in \cite{DFR+07} is that we must take relation \eqref{equalBits} into account.

\begin{prop}
\label{baseReceiverSecurity}
\Cref{protocol:RandQOT} is perfectly receiver-secure.
\end{prop}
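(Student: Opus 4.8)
The proof will follow that of Proposition~4.5 in \cite{DFR+07}, and the single conceptual ingredient is that \Cref{protocol:RandQOT} is \emph{non-interactive in the receiver-to-sender direction}: the receiver never transmits a message. Hence whatever register a dishonest sender $\widetilde{S}$ holds at the end is the output of a fixed channel applied only to the qubits and the string $v^\alpha$ that the device handed to $S$ in \cref{protocol:RandQOT:step1}, together with $\widetilde{S}$'s own randomness; nothing the receiver does — in particular nothing depending on the choice bit $C$ — can be seen by $\widetilde{S}$. Consequently the joint state of $C$ with $\widetilde{S}$ is a product $\rho_C \otimes \rho_{\widetilde{S}}$ for every distribution of $C$. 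The only work left is to manufacture the strings $S_0',S_1'$ required by \Cref{randOT} and to verify $Y = S_C'$.

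To do this I would pass, as in \cite{DFR+07}, to the idealized scenario in which the honest receiver is replaced by one with unbounded quantum memory: this receiver skips the measurement of \cref{protocol:RandQOT:step2}, keeps its $n$ qubits, and only after $\widetilde{S}$ announces $x$ and $f_0,f_1$ in \cref{protocol:RandQOT:step4} measures its $i$-th qubit in basis $x_i$, obtaining outcomes $b^\ast$, and then sets \(S_r' \coloneqq f_r\big(b^\ast \oplus w^\beta|_{I_r}\big)\) for both $r\in\{0,1\}$, where now $w^\beta$ is taken relative to the announced bases $x$ and $I_r = \{i : x_i = [\texttt{Computational, Hadamard}]_r\}$. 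Deferring the receiver's measurement and then post-processing only shuffles operations on the receiver's side, and since the receiver is silent this is invisible to $\widetilde{S}$; thus the resulting state $\rho_{CS_0'S_1'\widetilde{S}}$ is a valid extension of the real state. On it, $(S_0',S_1',\widetilde{S})$ is produced by a procedure that never refers to $C$, so $\rho_{CS_0'S_1'\widetilde{S}} = \rho_C \otimes \rho_{S_0'S_1'\widetilde{S}}$.

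It remains to couple the two receivers so that $Y = S_C'$ with certainty. On the coordinates $i \in I_C$ the honest receiver's basis $y = [\texttt{Computational, Hadamard}]_C$ coincides with the announced $x_i$, so the honest measurement and the deferred measurement are literally the same single-qubit measurement on the same qubit; coupling them to produce identical outcomes on $I_C$ (the measurements on the remaining qubits commute and may be generated freshly) gives $b|_{I_C} = b^\ast|_{I_C}$, while the two versions of the mask also agree on $I_C$. Matching up the bit values here is exactly the point at which relation \eqref{equalBits} is invoked — this is the lone place where the Bell-pair formulation of \Cref{protocol:RandQOT} differs from the conjugate-coding protocol of \cite{DFR+07}. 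Hence $Y = f_C(b \oplus w^\beta|_{I_C}) = f_C(b^\ast \oplus w^\beta|_{I_C}) = S_C'$, so $\Pr[Y = S_C'] = 1$. Finally, \Cref{protocol:RandQOT} contains no abort step, so $\Pr[Z=1]=1$ and $\rho^{Z=1}_{CS_0'S_1'\widetilde{S}} = \rho_{CS_0'S_1'\widetilde{S}}$; substituting into \eqref{receiverSecurity} makes its left-hand side $0$, which is perfect receiver-security.

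The step I expect to require the most care is making the ``$\widetilde{S}$ learns nothing about $C$'' argument and the deferred-measurement substitution airtight for a completely \emph{arbitrary} dishonest sender (there is no IID or honesty assumption on $\widetilde{S}$ at this point): one must write down a single joint experiment in which $b^\ast|_{I_C}$ genuinely coincides with the honest outcome $b|_{I_C}$ while $b^\ast|_{I_{1-C}}$ is independent, and then check that the masks $w^\alpha,w^\beta$ and index sets $I_0,I_1$ behave as claimed so that $Y = S_C'$ holds identically and not merely in distribution. This is the content carried over from \cite{DFR+07}, together with the small bookkeeping adjustment coming from \eqref{equalBits}.
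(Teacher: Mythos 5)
Your proposal is correct and follows essentially the same route as the paper: pass to an idealized receiver with unbounded memory who defers measurement until $x$ is announced, construct $S_0',S_1'$ from the deferred outcomes, observe that nothing about this construction touches $C$ (so the product form holds), and argue the real and idealized experiments yield the same marginal $\rho_{CY\widetilde{S}}$ because they differ only in how the $I_{1-C}$ qubits are handled. The paper defines $\hat{S}_r'$ from the sender-side outcome $A$ and then invokes $A \oplus W^\alpha = B \oplus W^\beta$ (relation \eqref{equalBits}) while you build directly from the receiver-side $b^\ast$, but this is a cosmetic difference; the mechanism is identical.
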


\begin{proof}
Since the probability of not aborting is one in \Cref{protocol:RandQOT}, we can drop the superscript $Z=1$ on our overall quantum state (since the superscript $Z=1$ denotes the state when the protocol does not abort; see \Cref{randOT}).

The ccq-state $\rho_{C Y \widetilde{S}}$ is defined by the experiment where $\widetilde{S}$ interacts with an honest memory-bounded $R$. Now, in a new Hilbert space, we define the ccccq-state $\hat{\rho}_{\hat{C} \hat{Y} \hat{S}'_0 \hat{S}'_1 \widetilde{S}}$ according to a different experiment.

In this different experiment, we let $\widetilde{S}$ interact with a receiver that has unbounded quantum memory. Let $V^\alpha,V^\beta$ be the strings that describe the random and independent choices of $v^\alpha, v^\beta$ which, together, describe which of the four Bell states are used in each round (see \cref{bellPairs}). Let~$A$ be the string the sender gets after measuring the $i$-th qubit in the basis~$x_i$ for $i=1, \ldots, n$. Define the string $W^{\alpha}$ in terms of $V^\alpha$ in the same way $w^\alpha$ is defined in terms of $v^\alpha$ in \Cref{protocol:RandQOT:step3}.

Now, the receiver waits to receive $x$ and then also measures the $i$-th qubit in the basis $x_i$ for $i = 1, \ldots, n$. Let $B$ be resulting string, and define the string $W^{\beta}$ in terms of $V^\beta$ in the same way $w^\beta$ is defined in terms of $v^\beta$ in \Cref{protocol:RandQOT:step2}.

Define $\hat{S}'_0 = f_0(A \oplus W^{\alpha} |_{I_0})$ and $\hat{S}'_1 = f_1(A \oplus W^{\alpha}|_{I_1})$. Note that $A \oplus W^{\alpha} = B \oplus W^{\beta}$. Sample $\hat{C}$ according to $P_C$ and set $\hat{Y} = \hat{S}'_C$. It follows by construction that $\text{Pr}[\hat{Y} \neq \hat{S}'_{\hat{C}}]=0$ and $\hat{\rho}_{\hat{C}}$ is independent of $\hat{\rho}_{\hat{S}'_0 \hat{S}'_1 \widetilde{S}}$.

It now remains to argue that $\hat{\rho}_{\hat{C} \hat{Y} \widetilde{S}} = \rho_{C Y \widetilde{S}}$ so that the corresponding $S'_0$ and $S'_1$ also exist in the original experiment. But, this is satisfied since the only difference between the two experiments is when and in what basis the qubits at position $i \in I_{1-C}$ are measured, which does not affect $\rho_{C Y \widetilde{S}}$ respectively $\hat{\rho}_{\hat{C} \hat{Y} \widetilde{S}}$.
\end{proof}

Now, for security against dishonest receivers, we restrict ourselves to receivers whose quantum storage is bounded during the execution of the protocol.

\begin{defn}[\cite{DFR+07}]
\label{setBoundedReceivers}
Let $\mathfrak{R}_\gamma$ denote the set of all possible quantum dishonest receivers $\widetilde{R}$ in \Cref{protocol:RandQOT} (or \Cref{protocol:DIOTsenderVerifier}) which have quantum memory of size at most $\gamma n$ when \Cref{protocol:RandQOT:step4} of \Cref{protocol:RandQOT} (or $\gamma n'$ when \Cref{protocol:DIOTsenderVerifier:step8} of \Cref{protocol:DIOTsenderVerifier}) is reached.
\end{defn}

The proof of $\varepsilon$-sender-security against $\mathfrak{R}_{\gamma}$, given below, is nearly identical to the proof of Theorem 4.6 ($\varepsilon$-sender-security) in \cite{DFR+07}. However, because we allow for the use of all four Bell states, we will need to show that
\begin{equation*}
    H^{\varepsilon}_{\infty}(K|X)=H^{\varepsilon}_{\infty}(A|X)
\end{equation*}
where $A$ is the random variable describing the outcome of the sender measuring their part of the quantum state in random basis $X$, and $K\coloneqq A \oplus W^\alpha$. Observe that if we only used Bell pairs of the form $\ket{\phi^{(0,0)}}$, then $W^\alpha=0$, and so our proof would fully reduce to the proof of Theorem 4.6 in \cite{DFR+07}, as expected.

\begin{prop}
\label{baseSenderSecurity}
\Cref{protocol:RandQOT} is $\varepsilon$-sender-secure against $\mathfrak{R}_\gamma$ for a negligible (in $n$) $\varepsilon$ if there exists a $k>0$ such that $\gamma n \leq n/4-2\ell-kn$.
\end{prop}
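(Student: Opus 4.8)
The plan is to follow the structure of the proof of Theorem 4.6 in \cite{DFR+07}, making the adjustments necessary to handle the use of all four Bell states and to accommodate the abort bookkeeping of \Cref{randOT}. Since \Cref{protocol:RandQOT} never aborts ($\Pr(Z=1)=1$), the superscript $Z=1$ can again be dropped, and it suffices to exhibit a binary random variable $C'$ such that $D(\rho_{S_{1-C'}S_{C'}C'\widetilde{R}}, \tfrac{1}{2^\ell}\mathds{1}\otimes\rho_{S_{C'}C'\widetilde{R}})\leq\varepsilon$ for every $\widetilde{R}\in\mathfrak{R}_\gamma$. As in \cite{DFR+07}, the argument should be run against an EPR-based (purified) version of the protocol: the device holds the second halves of the Bell pairs, and instead of the receiver measuring in \Cref{protocol:RandQOT:step2} we let $\widetilde{R}$ apply an arbitrary quantum operation, after which its quantum register $E$ has at most $\gamma n$ qubits when \Cref{protocol:RandQOT:step4} is reached; the sender's measurement outcome $A$ in basis $X$ then has a well-defined joint distribution with $X$ and with $\widetilde{R}$'s state.

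First I would set up the entropy bound on the sender's raw string. Conditioned on the Bell-pair labels $V^\alpha, V^\beta$, the sender's qubits, once the receiver's halves are fixed by $\widetilde{R}$'s operation, form an $n$-qubit state, so \Cref{entropicUncertainty} gives $H^{\varepsilon_1}_\infty(A\mid X)\geq(\tfrac12-2\lambda)n$ for suitable $\lambda$ and exponentially small $\varepsilon_1$ — this step is unchanged from \cite{DFR+07} because it concerns only the measurement of an arbitrary $n$-qubit state in a uniformly random basis. The new ingredient is the observation flagged in the text: with $K\coloneqq A\oplus W^\alpha$, where $W^\alpha$ is a deterministic function of $X$ and the classical label $V^\alpha$ (and $V^\alpha$ is independent of everything the receiver can influence and is not revealed to the receiver before \Cref{protocol:RandQOT:step4}), applying the fixed bijection $a\mapsto a\oplus w^\alpha$ coordinatewise does not change the conditional min-entropy given $X$; hence $H^{\varepsilon_1}_\infty(K\mid X)=H^{\varepsilon_1}_\infty(A\mid X)\geq(\tfrac12-2\lambda)n$. (One must be slightly careful that the smoothing event can be chosen compatibly; since the map is a fixed bijection this is immediate.) I would then split $K$ as $K=(K|_{I_0},K|_{I_1})$ according to the sender's basis choice and apply the Min-Entropy-Splitting corollary, \Cref{entropySplit}, with $Z=X$: this yields a binary $C'$ with $H^{\varepsilon_1+\varepsilon_2}_\infty(K|_{I_{1-C'}}\mid XC')\geq\tfrac12(\tfrac12-2\lambda)n-1-\log(1/\varepsilon_2)$.

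Next I would move the bound past the receiver's quantum memory using the chain rule, \Cref{chainRule}: treating $\widetilde{R}$'s at-most-$\gamma n$-qubit register $E$ as a classical system of $\gamma n$ bits in the worst case (or, more precisely, invoking the chain-rule inequality with $H_0$ of that register bounded by $\gamma n$), we get $H^{\varepsilon'}_\infty(K|_{I_{1-C'}}\mid XC'E)\gtrsim\tfrac12(\tfrac12-2\lambda)n-\gamma n-O(\log(1/\varepsilon''))$. Feeding this into the Privacy Amplification theorem, \Cref{privacyAmp}, with the two-universal family $\mathcal{F}$ and output length $\ell$ — noting that $S_{1-C'}=f_{1-C'}(K|_{I_{1-C'}})$ and that $f_0,f_1$ are announced, while $S_{C'},C'$ and $\widetilde{R}$'s residual state constitute the side information — bounds $D(\rho_{S_{1-C'}S_{C'}C'\widetilde{R}},\tfrac1{2^\ell}\mathds{1}\otimes\rho_{S_{C'}C'\widetilde{R}})$ by roughly $\tfrac12 2^{-\tfrac12(H^{\varepsilon'}_\infty(\cdot)-\ell)}+2\varepsilon'$. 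Choosing $\lambda$ small and using the hypothesis $\gamma n\leq n/4-2\ell-kn$, the exponent is at least $\tfrac12(kn - O(\log n))$ up to lower-order terms, which makes the whole expression negligible in $n$; one finishes by optimizing the smoothing parameters (e.g. taking them inverse-polynomial in $n$) so that every error term is negligible. The only genuinely new point over \cite{DFR+07} is the min-entropy invariance $H^{\varepsilon}_\infty(K\mid X)=H^{\varepsilon}_\infty(A\mid X)$, and I expect the main obstacle to be purely bookkeeping: verifying that $W^\alpha$ really is independent of the receiver's side information at the relevant stage (so that the bijection trick is legitimate) and tracking the constants so that the final bound is negligible precisely under the stated condition on $\gamma$.
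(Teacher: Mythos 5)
Your overall outline matches the paper's proof: apply the uncertainty relation (\Cref{entropicUncertainty}) to get $H^{\varepsilon}_{\infty}(A|X)\geq(\tfrac12-2\lambda)n$, argue that $K=A\oplus W^{\alpha}$ has the same smooth min-entropy given $X$ (this is indeed the only genuinely new point), split via \Cref{entropySplit} to produce $C'$, and then feed into privacy amplification. However, the way you route the receiver's quantum register $E$ through the argument diverges from the paper and, as stated, has a gap. You propose to ``condition on $E$'' via the chain rule \Cref{chainRule}, treating the at-most-$\gamma n$-qubit register as $\gamma n$ classical bits, and then apply \Cref{privacyAmp} with a bound of the form $\tfrac12 2^{-\tfrac12(H^{\varepsilon'}_{\infty}(\cdot)-\ell)}+2\varepsilon'$ that does not include the $-q$ term. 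But \Cref{chainRule} is a purely classical chain rule (its conditioning variable $Y$ is a classical random variable with $H_0(Y)$), and you cannot simply substitute a quantum register for $Y$ without a separate argument; the whole point of \Cref{privacyAmp} is that the quantum side information $E$ with $q$ qubits is handled \emph{directly} by the $-q$ term in the exponent. The paper never conditions on $E$ at all in the entropy calculation: it sets $U=XS_{C'}F_{C'}C'$ (all classical), keeps $E$ quantum with $q=\gamma n$, and lets \Cref{privacyAmp} absorb it.

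The second issue is that you have not conditioned on $S_{C'}=F_{C'}(K_{C'})$ in the min-entropy bound even though you correctly list $S_{C'}$ as side information in the privacy-amplification step. The paper's chain-rule step is used precisely for this purpose: to condition on the $\ell$-bit $F_{C'}(K_{C'})$ (and the independent $F_{C'}$), not on $E$. That costs $\ell$ bits of entropy and explains the $2\ell$ (rather than $\ell$) in the hypothesis $\gamma n\leq n/4-2\ell-kn$: one $\ell$ pays for conditioning on $S_{C'}$ in the chain rule and the other $\ell$ is the output length in \Cref{privacyAmp}. Both issues are correctable within your parameter budget, but as written the argument mixes up which lemma handles which piece of side information, and would not go through with the lemmas as stated in the paper. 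The fix is simply to follow the paper's allocation: use the chain rule only for the classical $F_{C'}(K_{C'})$ (and $F_{C'}$), and let the quantum register $E$ be handled by the $q$ parameter of \Cref{privacyAmp}.
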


\begin{proof}
Since the probability of not aborting is one in \Cref{protocol:RandQOT}, we can drop the superscript $Z=1$ on our overall quantum state (since the superscript $Z=1$ denotes the state when the protocol does not abort; see \Cref{randOT}).

Now consider the quantum state in \Cref{protocol:RandQOT} after $\widetilde{R}$ has measured all but $\gamma n$ of their qubits. Let $V^\alpha,V^\beta$ be the random variables that describe the random and independent choices of $v^\alpha, v^\beta$ which, together, describe which of the four Bell states are used in each round (see \cref{bellPairs}). Define $W^\alpha$ in terms of $V^\alpha$ in the same way $w^\alpha$ is defined in terms of $v^\alpha$ in \Cref{protocol:RandQOT:step3}. Let $A$ be the random variable that describes the outcome of the sender measuring their part of the state in random basis $X$, and let $E$ be the random state that describes $\widetilde{R}$'s part of the state. Let $F_0$ and $F_1$ be the random variables that describe the random and independent choices of $f_0,f_1 \in \mathcal{F}$.

Choose $\lambda, \lambda', \kappa$ all positive, but small enough such that
\begin{equation}
\label{baseSenderSecurity:parameters}
    \gamma n \leq (1/4 - \lambda - 2\lambda' - \kappa)n-2\ell-1.
\end{equation}
From the uncertainty relation \Cref{entropicUncertainty}, we know that $H^\varepsilon_{\infty}(A | X) \geq (1/2-2\lambda)n$ for~$\varepsilon$ exponentially small in $n$. Let $A_r = A|_{I_r}$ and $W^{\alpha}_r = W^{\alpha}|_{I_r}$ where
\begin{align*}
    I_r = \{i: X_i = [\texttt{Computational, Hadamard}]_r\}.
\end{align*}
Define $K=A \oplus W^\alpha$ and let $K_r$ be $K|_{I_r}$ padded with zeros so that it will make sense to apply $F_r$. Now, to see that $H^{\varepsilon}_{\infty}(K|X)=H^{\varepsilon}_{\infty}(A|X)$, it suffices to observe that, for a single round,
\begin{align*}
    P_{AX\mathcal{E}}(a,x) &= P_{A\mathcal{E}|X}(a|x=0) \cdot P_{X}(x=0) + P_{A\mathcal{E}|X}(a|x=1) \cdot P_{X}(x=1),
\end{align*}
and,
\begin{align*}
    P_{A\mathcal{E}|X}(a|x=1) &= P_{KW^\alpha \mathcal{E}|X}(k \oplus w^{\alpha}=a|x=1)\\
    &= P_{K\mathcal{E}|X}(k = a \oplus w^\alpha|w^\alpha,x=1) \cdot P_{W^\alpha}(w^\alpha|x=1)\\
    &= \bigg(\sum^{2}_{j=1} P_{K\mathcal{E}|X}(k = a|x=1)\cdot\frac{1}{2}\bigg)\\
    &= P_{K\mathcal{E}|X}(k|x=1)\\
    &\\
    P_{A\mathcal{E}|X}(a|x=0) &= P_{KW^\alpha \mathcal{E}|X}(k \oplus w^{\alpha}=a|x=0)\\
    &= P_{K \mathcal{E}|X}(k=a \oplus w^\alpha|w^\alpha, x=0)\cdot P_{W^\alpha}(w^\alpha|x=0)\\
    &= P_{K\mathcal{E}|X}(k|x=0).
\end{align*}
As result of this, we have $H^{\varepsilon}_{\infty}(K|X)=H^{\varepsilon}_{\infty}(A|X)\geq (1/2-2\lambda)n$. Furthermore, we have $H^\varepsilon_{\infty}(K_0 K_1 | X) \geq (1/2-2\lambda)n$.

Therefore, by \Cref{entropySplit}, there exists a binary random variable $C'$ such that for \newline $\varepsilon'=2^{-\lambda'n}$, it holds that
\begin{equation}
\label{conditionC}
    H^{\varepsilon+\varepsilon'}_{\infty}(K_{1-C'}|X,C') \geq (1/4-\lambda-\lambda')n-1.
\end{equation}
It is clear that we can condition on the independent $F_{C'}$ and use the chain rule \Cref{chainRule} to obtain
\begin{align}
    H^{\varepsilon + 2\varepsilon'}_{\infty}(K_{1-C'}&|X F_{C'}(K_{C'}) F_{C'},C') \label{baseSenderSecurity:conditionFC} \\
    &\geq H^{\varepsilon+\varepsilon'}_{\infty}(K_{1-C'}F_{C'}(K_{C'})|XF_{C'}C')-H_0(F_{C'}(K_{C'})|XF_{C'}C')-\lambda'n \nonumber \\
    &\geq (1/4-\lambda-2\lambda')n-\ell-1 \nonumber\\
    &\geq \gamma n + \ell + \kappa n, \nonumber
\end{align}
by the choice of $\lambda, \lambda', \kappa$.

We write $S_0 = F_0(K_0)$ and $S_1 = F_1(K_1)$. Now, by setting $U=X S_{C'} F_{C'} C'$ and then applying \Cref{privacyAmp}, we get
\begin{align*}
    D(\rho_{S_{1-C'}F_{1-C'} X S_{C'} F_{C'} C' E}, & \, \tfrac{1}{2^{\ell}}\mathds{1} \otimes \rho_{F_{1-C'} X S_{C'} F_{C'} C' E})\\
    & \leq \frac{1}{2} 2^{-\frac{1}{2}(H^{\varepsilon+2\varepsilon'}_{\infty}(K_{1-C'}|X S_{C'} F_{C'} C')-\gamma n-\ell)}+(2\varepsilon+4\varepsilon')\\
    & \leq \frac{1}{2} 2^{-\frac{1}{2}\kappa n} + 2\varepsilon + 4\varepsilon'
\end{align*}
which is negligible, and thus we have $\varepsilon$-sender security.
\end{proof}


\subsection{Device-independence from computational assumptions}
\label{subsection:DIcomp}

We now discuss our second building block: a self-testing protocol for a single (quantum) device that relies on a computational assumption. We start in \Cref{subsubsection:OgSelfTest} with a discussion of the single-device self-testing protocol from \cite{MDCA21}, \Cref{protocol:testTwoVerifiers}, to outline how it works and to present the self-testing guarantee of the protocol, \Cref{selfTestGuarantee}. Then, in \Cref{subsubsection:ModSelfTest}, we present our single-device self-testing protocol, \Cref{protocol:testOneVerifier}, which is a slightly modified version of \Cref{protocol:testTwoVerifiers} that is appropriate for the setting where there are two parties that distrust each other, as in the case of OT.

\subsubsection{Original single-device self-testing protocol}
\label{subsubsection:OgSelfTest}

In short, the single-device self-testing protocol from \cite{MDCA21, MV21} uses computational assumptions to certify that a device, consisting of two components but connected by a quantum channel, has correctly prepared a quantum state and measured it according to the bases specified by the verifier(s). This protocol was first presented in \cite{MV21} as a protocol for a single verifier. Then, in \cite{MDCA21}, the protocol was stated in a form that involves two verifiers so that it could be used to formulate a device-independent quantum key distribution protocol that generates an information-theoretically secure key; this latter form of the protocol, presented as \Cref{protocol:testTwoVerifiers} below, will be our starting point.

The computational assumption made in this protocol is that the device cannot solve the Learning with Errors problem during the execution of the protocol; specifically, the underlying cryptographic primitive is an ENTCF family (see \Cref{subsection:ENTCF}). This computational assumption replaces the non-communication assumption that is necessary for typical self-testing protocols which rely on the violation of a Bell inequality. While non-local operations are then possible, there is an honest implementation which only requires local operations and EPR pairs that are distributed on-the-fly.

A brief description of this honest implementation is outlined throughout \Cref{protocol:testTwoVerifiers}, and is further described in the discussion of \Cref{protocol:testTwoVerifiers}'s winning condition. For a detailed description of the device's honest behaviour, the reader is referred to the appendix of \cite{MDCA21}; for further details, the reader is referred to \cite{MV21}.

\begin{breakablealgorithm}
\caption{Single-device self-testing with two verifiers}
\label{protocol:testTwoVerifiers}
\begin{algorithmic}[1]

\State
\label{protocol:testTwoVerifiers:step1}
Alice chooses a basis, called the {\em state basis}, $\theta^A \in \{\texttt{Computational, Hadamard}\}$ uniformly at random and generates a key $k^A$ together with a trapdoor $t^A$, where the generation procedure for $k^A$ and $t^A$ depends on $\theta^A$ and a security parameter $\eta$. Likewise, Bob generates $\theta^B, k^B$, and $t^B$. The keys are such that the device cannot efficiently compute the state bases $\theta^A, \theta^B$ from the keys $k^A, k^B$. Alice and Bob send the keys $k^A, k^B$ to the device.

\vspace{2mm}
\State
\label{protocol:testTwoVerifiers:step2}
Alice and Bob receive strings $c^A$ and $c^B$, respectively, from the device.
\newline
{\em Honest behaviour: prepare a product state $\ket{\psi^A}\ket{\psi^B}$, where $\ket{\psi^A}$ and $\ket{\psi^B}$ are, respectively, functions of $k^A$ and $k^B$. Measure part of $\ket{\psi^A}$ in the computational basis to obtain a string $c^A$ and send it to Alice, keeping the remainder of the state. Similarly, obtain $c^B$ from $\ket{\psi^B}$ and send it to Bob.}

\vspace{2mm}
\State
\label{protocol:testTwoVerifiers:step3}
Alice and Bob independently choose a {\em challenge type} $CT \in \{\texttt{a,b}\}$ uniformly at random and send them to the device.

\vspace{2mm}
\State
\label{protocol:testTwoVerifiers:step4}
If $CT=\texttt{a}$ for Alice (Bob):
\begin{enumerate}[i)]
    \item Alice (Bob) receives string $z^A$ ($z^B$) from the device. \newline {\em Honest behaviour: measure the remainder of the state $\ket{\psi^A}$ $(\ket{\psi^B})$ in the computational basis and send the resulting string $z^A$ $(z^B)$ to Alice (Bob).}
\end{enumerate}

\vspace{2mm}
\State
\label{protocol:testTwoVerifiers:step5}
If $CT=\texttt{b}$ for Alice (Bob):
\begin{enumerate}[i)]
    \item Alice (Bob) receives string $d^A$ ($d^B$) from the device. \newline {\em Honest behaviour: measure the remainder of the state $\ket{\psi^A}$ $(\ket{\psi^B})$, except for one qubit of the state, in the Hadamard basis and send the resulting string $d^A$ $(d^B)$ to Alice (Bob).}
    \item Alice (Bob) chooses a uniformly random measurement basis \newline $x \in \{\texttt{Computational, Hadamard}\}$ ($y \in \{\texttt{Computational, Hadamard}\}$) and sends it to the device.
    \item Alice (Bob) receives an answer bit $a$ ($b$) from the device. Alice (Bob) also receives the bit $h^A$ ($h^B$) from the device. \newline {\em Honest behaviour: when $CT=\texttt{b}$ for Alice and Bob, the remaining state has two qubits. In place of a controlled-Z operation, apply the circuit in \Cref{circuit} by using a single EPR pair that has been distributed on-the-fly, and return the bits $h^A$ and $h^B$ to Alice and Bob, respectively. Then apply a Hadamard gate on the second qubit. Then measure the first qubit in the basis $x$ and the second qubit in the basis $y$, obtaining outcomes $a,b \in \{0,1\}$, respectively. Send $a$ to Alice and $b$ to Bob.}
\end{enumerate}

\end{algorithmic}
\end{breakablealgorithm}

Before discussing the winning condition of \Cref{protocol:testTwoVerifiers}, we make a few remarks. First, we note that Alice's state basis $\theta^A$ determines which family her key $k^A$ will belong to: $k^A \in \mathcal{K_G}$ if $\theta^A=\texttt{Computational}$, and $k^A \in \mathcal{K_F}$ if $\theta^A = \texttt{Hadamard}$. Likewise for Bob.

We also note that the purpose of using the circuit in \Cref{circuit} at \Cref{protocol:testTwoVerifiers:step5}(iii) instead of a controlled-$Z$ gate is to replace the need for non-local operations. By using the circuit in \Cref{circuit}, an honest device can succeed in \Cref{protocol:testTwoVerifiers} with only local operations and EPR pairs that are distributed-on-the-fly. Note that if $\ket{\psi}_{AB}$ is the state just before the circuit is applied, then after the circuit we have
\begin{equation}
    \sigma_X^{h^A}\sigma_Z^{h^B} \otimes \sigma_X^{h^B}\sigma_Z^{h^A}CZ\ket{\psi}_{AB}
\end{equation}
where $CZ$ is the controlled-$Z$ gate. The operator $\sigma_X^{h^A}\sigma_Z^{h^B} \otimes \sigma_X^{h^B}\sigma_Z^{h^A}$ can be undone by having the device's components communicate the bits $h^A$ and $h^B$ with each other and then applying the appropriate local operations, though this is undesirable as the purpose in using the circuit in \Cref{circuit} is to allow an honest device to succeed without communication between its components. So, rather than having the device undo the operator $\sigma_X^{h^A}\sigma_Z^{h^B} \otimes \sigma_X^{h^B}\sigma_Z^{h^A}$, the checks that Alice and Bob need to perform are modified to account for the presence of the operator. It is also worth noting that by using the circuit in \Cref{circuit}, the actions of the component held by Alice and the component held by Bob are independent of each other. This means that the honest implementation described above can be extended to the case where Alice and Bob choose different challenge types $CT$, that is, each component acts according to the challenge type it received.

\begin{figure}[ht]
    \centering
\begin{tikzpicture}
    \node[inputOutput] at (0,0) (bla1) {\small $A$};
    \node[meter]       at (4,0) (meterA)  {};
    \node[inputOutput] at (5.2,0) (bla2) {\small $h^A$};

    \draw[thick,decorate,decoration={brace},xshift=-4pt,yshift=0pt]
    (0,-1.4) -- (0,-0.65) node [black,midway,xshift=-1cm] {$\ket{\text{EPR}}$};

    \node[inputOutput] at (0,-0.75) (bla5) {};
    \node[circle, fill=black, inner sep=0pt, minimum size=5pt] at (2.5,-0.75) {};
    \node[circle, thick, draw] at (2.5,0) {};
    \node[cross=4.2pt, rotate=45, thick] at (2.5,0) {};
    \node[inputOutput] at (4,-0.75) (bla6) {};

    \node[inputOutput] at (0,-1.3) (bla7) {};
    \node[operator]    at (1.5,-1.3) (op1) {$H$};
    \node[circle, fill=black, inner sep=0pt, minimum size=5pt] at (2.5,-1.3) {};
    \node[circle, thick, draw] at (2.5,-2.05) {};
    \node[cross=4.2pt, rotate=45, thick] at (2.5,-2.05) {};
    \node[inputOutput] at (4,-1.3) (bla8) {};

    \node[inputOutput] at (0,-2.05) (bla3) {\small $B$};
    \node[meter]       at (4,-2.05) (meterB)  {};
    \node[inputOutput] at (5.2,-2.05) (bla4) {\small $h^B$};

    \draw[thick] (bla1) -- (meterA);
    \draw[thick, double] (meterA) -- (bla2);

    \draw[thick] (bla3) -- (meterB);
    \draw[thick, double] (meterB) -- (bla4);

    \draw[thick] (bla5) -- (bla6);
    \draw[thick] (bla7) -- (op1);
    \draw[thick] (op1) -- (bla8);

    \draw[thick] (2.5,-1.3) -- (2.5,-2.05);
    \draw[thick] (2.5,-0.75) -- (2.5,0);
\end{tikzpicture}
    \caption{\label{circuit} Circuit from \cite{MDCA21} to replace controlled-$Z$ gate}
\end{figure}
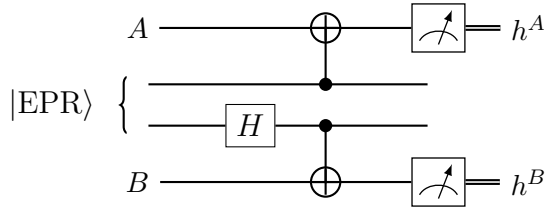

We now describe the winning condition of \Cref{protocol:testTwoVerifiers}. In doing this, we will briefly discuss the honest behaviour of the device that was outlined in \Cref{protocol:testTwoVerifiers}. We remind the reader that for a detailed description of the device's honest behaviour, and thus why the winning condition is as it is, the reader is referred to the appendix of \cite{MDCA21} and, for further details, to \cite{MV21}.

For the device to win a given round of \Cref{protocol:testTwoVerifiers}, several checks must be passed. If these checks pass, then Alice and Bob set a variable $W$ to \texttt{pass}; otherwise, $W=\texttt{fail}$.

\paragraph{If $CT=\texttt{a}$:}
For Alice, let $z_1^A$ be the first bit of $z^A$, and $z_r^A$ be the remainder of the string. Regardless of whether $k^A \in \mathcal{K_F}$ or $k^A \in \mathcal{K_G}$, Alice checks if $f_{k^A, z_1^A}(z_r^A)=c^A$. Likewise for Bob. If this check passes for both Alice and Bob, set $W=\texttt{pass}$.

\paragraph{If $CT=\texttt{b}$:}
For Alice, the honest behaviour of the device after \Cref{protocol:testTwoVerifiers:step5}(i) leaves the state $\ket{\psi^A}$ in one of two possible states. If $k^A \in \mathcal{K_F}$, then we have
\begin{equation}
\label{productForm}
    \ket{\psi_1^A} = \ket{0} + (-1)^{d^A \cdot (x_0^A \oplus x_1^A)}\ket{1}
\end{equation}
where $(x^A_0, x^A_1)$ is precisely the unique claw for the function pair $(f_{k^A,0}, f_{k^A,1})$ indexed by the key $k^A$, and the string $c^A$ satisfies
\begin{equation*}
    f_{k^A,0}(x^A_0) = f_{k^A,1}(x^A_1)=c^A.
\end{equation*}
If $k^A \in \mathcal{K_G}$, then we have
\begin{equation*}
    \ket{\psi_1^A} = \ket{\hat{b}^A}
\end{equation*}
where $\hat{b}^A \in \{0,1\}$ is such that
\begin{equation*}
    f_{k^A, \hat{b}^A}(\hat{x}^A)=c^A.
\end{equation*}
In \Cref{protocol:testTwoVerifiers:step5}(iii), just before measurement, the state held by the device is, up to a global phase,
\begin{equation*}
    \ket{\psi_2} = \sigma_X^{h^A}\sigma_Z^{h^B} \otimes \sigma_X^{h^B}\sigma_Z^{h^A}(\mathds{1} \otimes H)CZ\ket{\psi_1^A}\ket{\psi_1^B},
\end{equation*}
where the Hadamard gate has been commuted past the operator $\sigma_X^{h^A}\sigma_Z^{h^B} \otimes \sigma_X^{h^B}\sigma_Z^{h^A}$.

Unless $k^A \in \mathcal{K_F}$ and $k^B \in \mathcal{K_F}$, the state $\ket{\psi_2}$ is a product state. Together, Alice and Bob can determine precisely what product state the honest device has prepared. Indeed, with the trapdoor $t^A$, Alice can easily compute $(x_0^A, x_1^A)$ or $\hat{x}^A$ from $c^A$ (by property \ref{trapdoorEasy} of the ENTCF family), and likewise for Bob. Then, with $h^A, h^B, d^A$, and $d^B$, Alice and Bob have everything they need to determine $\ket{\psi_2}$. Knowing $\ket{\psi_2}$, Alice and Bob now determine what answers an honest device would have returned to their measurement basis questions $x$ and~$y$, and then check if the answers $a$ and $b$ returned by the device are the same; if they are, they set $W=\texttt{pass}$.

Now if $k^A \in \mathcal{K_F}$ and $k^B \in \mathcal{K_F}$ (\emph{i.e.}, $\theta^A = \theta^B = \texttt{Hadamard}$), then the state held by the device in \Cref{protocol:testTwoVerifiers:step5}(iii), just before measurement, turns out to be one of the four Bell states, up to a global phase,
\begin{equation*}
    \ket{\phi^{(d^A \cdot (x^A_0 \oplus \, x^A_1), \, d^B \cdot (x^B_0 \oplus \, x^B_1))}},
\end{equation*}
where
\begin{equation*}
    \ket{\phi^{(v^\alpha, \, v^\beta)}} = (\sigma^{v^\alpha}_Z \, \sigma^{v^\beta}_X \otimes \mathds{1}) \frac{\ket{00}+\ket{11}}{\sqrt{2}},
\end{equation*}
and so,
\begin{align*}
    v^\alpha &= d^A \cdot (x^A_0 \oplus x^A_1),\\
    v^\beta &= d^B \cdot (x^B_0 \oplus x^B_1).
\end{align*}
Recall that, with the trapdoor $t^A$, Alice can easily compute $(x_0^A, x_1^A)$, and likewise for Bob, and thus they can compute $v^{\alpha}$ and $v^{\beta}$. Recalling relation \eqref{equalBits}, Alice and Bob perform the following checks:
\begin{itemize}
    \item If $x=y=\texttt{Computational}$, check if $a\oplus b = d^B \cdot (x_0^B \oplus x_1^ B)$.
    \item If $x=y=\texttt{Hadamard}$, check if $a\oplus b = d^A \cdot (x_0^A \oplus x_1^A)$.
\end{itemize}
If one of the above checks pass, or if $x \neq y$, then set $W=\texttt{pass}$.

\paragraph{}
Now that the winning condition of \Cref{protocol:testTwoVerifiers} has been described, let us give the self-testing guarantee, stated as \Cref{selfTestGuarantee} below. Note that the guarantee is essentially stating that any computationally bounded device that wins in \Cref{protocol:testTwoVerifiers} must have performed single qubit measurements on a Bell state to obtain the results returned to the verifier. Recall that $\{Q^a_x\}_{a\in\{0,1\}}$ denotes the single-qubit measurement in the basis $x$ (see \Cref{subsection:notationBasics}), and note that for questions $x,y$, we denote the 4-outcome measurement used by the device to obtain answers $a,b$ by $\{P^{(a,b)}_{x,y}\}_{a,b \in \{0,1\}}$. We denote the state held by the device by $\sigma^{(v^\alpha,v^\beta)}$, where $v^\alpha, v^\beta$ are the bits that label which of the Bell states the device should have prepared, as in \cref{bellPairs}.

\begin{theorem}[\cite{MDCA21}]
\label{selfTestGuarantee}
Consider a device that wins \Cref{protocol:testTwoVerifiers} with probability $1-~\delta$ and make the LWE assumption. Let $\eta$ be the security parameter used in the protocol, $v^{\alpha}, v^{\beta} \in~\{0,1\}$ be the bits that label the Bell state as in \cref{bellPairs}, $\mathcal{H}$ be the device's physical Hilbert space, and $\mathcal{H}'$ be some ancillary Hilbert space. Then there exists an isometry $V: \mathcal{H} \rightarrow \mathbb{C}^4 \otimes \mathcal{H}'$ and some state $\zeta^{(v^\alpha,v^\beta)}_{\mathcal{H}'}$ such that, in the case $\theta^A=\theta^B=\texttt{Hadamard}$, the following holds:
\begin{align*}
    VP^{(a,b)}_{x,y}\sigma^{(v^\alpha,v^\beta)}P^{(a,b)}_{x,y}V^{\dagger}& \approx_{\mathcal{O}(\delta^r)+\textsf{negl}(\eta)}\\
    &\quad \bigg( (Q^a_x \otimes Q^b_y) \ket{\phi^{(v^\alpha,v^\beta)}}\bra{\phi^{(v^\alpha,v^\beta)}}(Q^a_x \otimes Q^b_y) \bigg) \otimes \zeta^{(v^\alpha,v^\beta)}_{\mathcal{H}'},
\end{align*}
where $r$ is some small constant arising in the proof.
\end{theorem}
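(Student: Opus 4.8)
The statement is the self-testing guarantee of \cite{MDCA21}, which is a two-verifier restatement of the single-device rigidity theorem of \cite{MV21} (itself built on the techniques of \cite{BCM+18, Mah18}). So the plan is to reduce to that result rather than reprove it from scratch; what follows is the structure of the argument I would present. It has three parts: (i) use the ENTCF properties to pin down the form of the device's state just before the final measurement; (ii) invoke the single-device self-testing machinery on each component and stitch the two components together through the circuit of \Cref{circuit}; and (iii) track the two error terms in the claimed bound.

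For part (i), I would first argue that a device winning \Cref{protocol:testTwoVerifiers} with probability $1-\delta$ must, on the $CT=\texttt{b}$ branch and in the case $\theta^A=\texttt{Hadamard}$ (so $k^A\in\mathcal{K_F}$), hold a state after \Cref{protocol:testTwoVerifiers:step5}(i) that --- after an isometry and up to error $\mathcal{O}(\delta^r)$ --- is of the product form \eqref{productForm} tensored with an ancilla; the injective case ($k^A\in\mathcal{K_G}$, state a computational-basis vector) is handled the same way, moving between the two via injective invariance~\ref{injInv}. Here both challenge types are used: the $CT=\texttt{a}$ checks force consistency of the returned strings with $c^A$, and together with the $CT=\texttt{b}$ equation checks this pins down the state. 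This step is exactly where the LWE assumption enters: a device whose state deviated too far could be converted into an algorithm that simultaneously outputs a preimage and the forbidden bit $d\cdot(x_0\oplus x_1)$, contradicting the adaptive hardcore bit property~\ref{adaptHardBit}, or that distinguishes $\mathcal{K_F}$ from $\mathcal{K_G}$, contradicting~\ref{injInv}. Using the trapdoors (property~\ref{trapdoorEasy}), Alice and Bob recover $v^\alpha = d^A\cdot(x_0^A\oplus x_1^A)$ and $v^\beta = d^B\cdot(x_0^B\oplus x_1^B)$, identifying which Bell state, as in \eqref{bellPairs}, an honest device should be measuring.

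For part (ii), in the branch $\theta^A=\theta^B=\texttt{Hadamard}$ the circuit of \Cref{circuit} replaces the controlled-$Z$ gate and produces, up to the known Pauli correction $\sigma_X^{h^A}\sigma_Z^{h^B}\otimes\sigma_X^{h^B}\sigma_Z^{h^A}$ and a Hadamard on the second qubit, the Bell state $\ket{\phi^{(v^\alpha,v^\beta)}}$. I would then feed this into the self-testing argument of \cite{MV21}: from the fact that the device passes the $x=y=\texttt{Computational}$ and $x=y=\texttt{Hadamard}$ checks derived from relation \eqref{equalBits}, extract approximate commutation/anticommutation relations for the device's measurement observables, and assemble a swap isometry $V:\mathcal{H}\to\mathbb{C}^4\otimes\mathcal{H}'$ under which the four-outcome measurement $\{P^{(a,b)}_{x,y}\}$ applied to $\sigma^{(v^\alpha,v^\beta)}$ is sent $(\mathcal{O}(\delta^r)+\textsf{negl}(\eta))$-close to the ideal product measurement $Q^a_x\otimes Q^b_y$ acting on $\ket{\phi^{(v^\alpha,v^\beta)}}\bra{\phi^{(v^\alpha,v^\beta)}}$, tensored with a residual state $\zeta^{(v^\alpha,v^\beta)}_{\mathcal{H}'}$. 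Finally, for part (iii), the $\textsf{negl}(\eta)$ error collects the computational-indistinguishability losses (switching key families, and bounding the probability of a hardcore-bit violation), while $\mathcal{O}(\delta^r)$ comes from propagating the $1-\delta$ winning probability through the operator-norm estimates of parts (i)--(ii), with $r$ the small constant those estimates produce.

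The hard part --- and the reason I would cite \cite{MDCA21} rather than do the work here --- is the rigidity step in part (ii): turning ``passes the checks with probability $1-\delta$'' into ``measures a Bell state up to isometry'' is the delicate sequence of operator inequalities of \cite{Mah18, MV21}, which is considerably more involved than this outline suggests. A secondary subtlety is bookkeeping the conditioning: the device wins with probability $1-\delta$ over all $(\theta^A,\theta^B,CT,\dots)$, so one must be careful that the error in the conclusion for the single relevant branch ($\theta^A=\theta^B=\texttt{Hadamard}$, $CT=\texttt{b}$) is still controlled by a polynomial in $\delta$, which is where the exponent $r<1$ originates.
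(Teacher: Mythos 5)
The paper does not prove this theorem --- it is stated verbatim as imported from \cite{MDCA21} and used as a black box throughout. Your proposal takes essentially the same route (citing \cite{MDCA21} and \cite{MV21} for the actual rigidity argument rather than reproving it), so there is no paper proof to compare you against; the accompanying sketch is an accurate high-level account of where the ENTCF properties, the LWE assumption, the teleported controlled-$Z$ circuit, and the two error terms $\mathcal{O}(\delta^r)$ and $\textsf{negl}(\eta)$ enter, and it matches the informal discussion the paper gives in the surrounding text.
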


It is worth mentioning why \Cref{selfTestGuarantee} only makes a statement about the case when $\theta^A = \theta^B = \texttt{Hadamard}$ and $CT = \texttt{b}$. In accordance with the honest implementation described in the discussion of the winning condition, rounds where $\theta^A = \theta^B = \texttt{Hadamard}$ and $CT = \texttt{b}$ are referred to as \texttt{Bell} rounds, while all other rounds are referred to as \texttt{Product} rounds.

The crucial point is that Alice and Bob will always know whether it is a \texttt{Bell} round or a \texttt{Product} round, but the computationally bounded device, which does not have access to $\theta^A$ and $\theta^B$, cannot determine what round it is in. This stems from the fact that the device is never given the trapdoor information $t^A, t^B$ and thus, by injective invariance of the ENTCF family (property \ref{injInv}), it is quantum-computationally hard for the device to determine which families the keys $k^A, k^B$ come from, and hence the type of round cannot be determined by the device.

Furthermore, Alice and Bob will always know precisely what Bell state or what product state the honest device should have prepared and, consequently, what answers should be returned in response to their measurement questions. So, to succeed and pass the checks of Alice and Bob, the device is forced to behave honestly. Thus, the self-testing guarantee only mentions \texttt{Bell} rounds.

Since \texttt{Bell} rounds are the ones where an honest device will prepare a Bell state, they are of primary interest. Recall that, in such a round, the state held by the device in \Cref{protocol:testTwoVerifiers:step5}(iii), just before measurement, is
\begin{equation*}
    \ket{\phi^{(d^A \cdot (x^A_0 \oplus \, x^A_1), \, d^B \cdot (x^B_0 \oplus \, x^B_1))}}.
\end{equation*}
As noted earlier, Alice and Bob can determine precisely which of the four Bell states have been prepared in a given \texttt{Bell} round. But, by the adaptive hardcore bit property of the ENTCF family (property \ref{adaptHardBit}), the device cannot efficiently compute $v^\alpha, v^\beta$ and hence cannot determine precisely what Bell state it has prepared.

\subsubsection{Modified single-device self-testing protocol}
\label{subsubsection:ModSelfTest}

It is important to note that \Cref{protocol:testTwoVerifiers} relies upon the verifiers, Alice and Bob, both behaving honestly. Specifically, after \Cref{protocol:testTwoVerifiers} has been executed, one of the two parties must publish their stored data so that the other can use it, along with their own stored data, to determine whether the device is behaving honestly or not. If, for instance, Bob is the one publishing his data for Alice to test the device, and he is dishonest, he could publish data that is different from what he received from the device, thereby giving Alice a false impression of the device's behavior.

This reliance on the honesty of Alice and Bob is a point of concern in the setting of~OT. Given this, we now present a variation of \Cref{protocol:testTwoVerifiers} where Alice is the sole verifier; Bob is still present in the protocol except that he is now being modelled as part of the device. The checks that must be performed are the same as the checks for \Cref{protocol:testTwoVerifiers}, except that they are all done by Alice now. The behaviour of an honest device in \Cref{protocol:testOneVerifier} is the same as the honest behaviour in \Cref{protocol:testTwoVerifiers}.

\begin{breakablealgorithm}
\caption{Single-device self-testing with a single verifier}
\label{protocol:testOneVerifier}
\begin{algorithmic}[1]

\State
\label{protocol:testOneVerifier:step1}
Alice chooses the state bases $\theta^A, \theta^B \in \{\texttt{Computational, Hadamard}\}$ uniformly at random and generates key-trapdoor pairs $(k^A,t^A), (k^B,t^B)$, where the generation procedure for $k^A$ and $t^A$ depends on $\theta^A$ and a security parameter $\eta$, and likewise for $k^B$ and $t^B$. Alice supplies Bob with $k^B$. Alice and Bob, respectively, then send the keys $k^A, k^B$ to the device.
\vspace{2mm}
\State
\label{protocol:testOneVerifier:step2}
Alice and Bob receive strings $c^A$ and $c^B$, respectively, from the device.

\vspace{2mm}
\State
\label{protocol:testOneVerifier:step3}
Alice chooses a {\em challenge type} $CT \in \{\texttt{a,b}\}$ uniformly at random and sends it to Bob. Alice and Bob then send $CT$ to each component of their device.

\vspace{2mm}
\State
\label{protocol:testOneVerifier:step4}
If $CT=\texttt{a}$:
\begin{enumerate}[i)]
    \item Alice and Bob receive strings $z^A$ and $z^B$, respectively, from the device.
\end{enumerate}

\vspace{2mm}
\State
\label{protocol:testOneVerifier:step5}
If $CT=\texttt{b}$:
\begin{enumerate}[i)]
    \item Alice and Bob receive strings $d^A$ and $d^B$, respectively, from the device.
    \item Alice chooses uniformly random measurement bases \newline $x,y \in \{\texttt{Computational, Hadamard}\}$ and sends $y$ to Bob. Alice and Bob then, respectively, send $x$ and $y$ to the device.
    \item Alice and Bob receive answer bits $a$ and $b$, respectively, from the device. Alice and Bob also receive bits $h^A$ and $h^B$, respectively, from the device.
\end{enumerate}

\end{algorithmic}
\end{breakablealgorithm}

The only role Bob has in \Cref{protocol:testOneVerifier} is to act as a relay between Alice and the component of the device held by Bob. It is clear that since Bob is not supplied with the state bases $\theta^A, \theta^B$ or the trapdoors $t^A, t^B$, any malicious behaviour from Bob can be folded into the device, and so without loss of generality, we can assume that Bob acts honestly in \Cref{protocol:testOneVerifier}. Then in order for Alice's checks to pass, the device must act honestly in \Cref{protocol:testOneVerifier} for the same reason that it must act honestly in \Cref{protocol:testTwoVerifiers}. Thus, \Cref{selfTestGuarantee} applies to \Cref{protocol:testOneVerifier} as well.

It should also be noted that to call \Cref{selfTestGuarantee}, we must know the probability $1-\delta$ with which the device wins \Cref{protocol:testOneVerifier}. Similar to the IID case in \cite{KW16}, the IID assumption enables Alice to estimate $\delta$ ahead of time; in our case, this can be done by having Bob temporarily give Alice his component of the device (so that Bob cannot influence the sample that Alice uses to estimate $\delta$).

Now suppose Alice uses $N$ rounds to estimate $\delta$. Let $F_i$ be a binary random variable for whether or not the device fails the $i$-th round. Then $F_1, \ldots, F_N$ are independent random variables, each of which is equal to 1 with probability $\delta$. If $F=F_1+\cdots+F_N$, then the expected value is $E(F)=N\delta$. Alice's estimate of $\delta$ is then $\delta' \coloneqq F/N$. If she wants her estimate to be within $\tau$ of $\delta$, then from the Chernoff bound, we have that,
\begin{align*}
    P(|\delta' - \delta|\geq\tau) &= P(|F-N \delta|\geq N\tau)\\
    &= P(|F-N\delta| \geq (\tau/\delta)N\delta)\\
    &\leq 2e^{-\frac{(\tau/\delta)^2N\delta}{3}}\\
    &= 2e^{-\frac{\tau^2N}{3\delta}}\\
    &\leq 2e^{-\frac{\tau^2N}{3}}
\end{align*}
Thus, $1-2e^{-\frac{\tau^2N}{3}} \leq P(\delta' \in (\delta-\tau, \delta+\tau))$. That is, Alice can improve her estimate $\delta'$ by choosing $\tau$ to be small and taking a large sample $N$.


\section{Device-Independent Oblivious Transfer}
\label{section:DIOT}

The goal of this section is to use the single-device self-testing protocol (\Cref{protocol:testOneVerifier}) to make the protocol for Rand 1-2 OT\textsuperscript{$\ell$} (\Cref{protocol:RandQOT}) device-independent. The result of this is \Cref{protocol:DIOTsenderVerifier}. It should be noted that \Cref{protocol:DIOTsenderVerifier} only considers the case where the sender is the verifier. Although it seems natural to require another protocol to allow the receiver to be the verifier, we find that such a protocol is entirely unnecessary due to the fact that we already have perfect-receiver-security for \Cref{protocol:DIOTsenderVerifier} (see \Cref{DIOTreceiverSecurity}).

Let us now describe \Cref{protocol:DIOTsenderVerifier}. The first five steps of \Cref{protocol:DIOTsenderVerifier} can be summarized as executing $n$ rounds of \Cref{protocol:testOneVerifier} (with the sender playing the role of Alice and the receiver playing the role of Bob), processing the data, and then checking if the device has behaved honestly for a subset of the rounds. In fact, these first five steps look very similar to the first six steps of the DIQKD protocol in \cite{MDCA21}. However, there are some key differences which we now discuss.

Firstly, the DIQKD protocol uses \Cref{protocol:testTwoVerifiers} for single-device self-testing while we use \Cref{protocol:testOneVerifier}. As noted earlier, the setting of OT is such that the sender and receiver do not necessarily trust each other. It is natural, then, that in making a device-independent version of \Cref{protocol:RandQOT}, we should not utilize \Cref{protocol:testTwoVerifiers} to verify the device, as this protocol relies on two cooperating parties. Instead, we should use \Cref{protocol:testOneVerifier}, which only requires one verifier.

Furthermore, it is assumed that the probability with which the device wins \Cref{protocol:testOneVerifier}, $1-\delta$, has been estimated by the sender prior to \Cref{protocol:DIOTsenderVerifier}, as discussed at the end of \Cref{subsubsection:ModSelfTest}. The result of this is that with probability at least $1-2e^{-\frac{\tau^2N}{3}}$, we have $\delta-\tau < \delta' < \delta+\tau$ and thus $\delta'-\tau < \delta$. We then use $\delta'-\tau$ as the threshold to check against in \Cref{protocol:DIOTsenderVerifier:step5}.

Secondly, we require that, with some probability, the receiver ignore the measurement basis question supplied by the sender and instead ask the device to measure in the basis specified by the choice bit. The purpose of this modification is to remedy the following problem. Since the sender is testing the device, they are supplying the receiver with all inputs for their component of the device. This gives the sender precise knowledge of what measurement basis the receiver is using for every round. Receiver-security is then compromised when the receiver tells the sender the indices of all rounds where they have retained, amongst the useable rounds, those where the measurement basis coincided with their choice bit; because the sender can immediately learn the choice bit from this. But, with our modification, we end up with a set of rounds $I$ where the sender is ignorant to the receiver's measurement basis questions, and thus ignorant to the choice bit. The sender remains ignorant to the choice bit even after the step where the sender tests the device's honesty; this is because when the receiver is required to send their stored data to the sender for the sake of testing, the receiver excludes data for rounds from $I$.

Now, we return to our description of \Cref{protocol:DIOTsenderVerifier}. In \Cref{protocol:DIOTsenderVerifier:step6}, the sender identifies a subset $\widetilde{I} \subseteq I$, which is the set of indices of all rounds in $I$ where the device has prepared a Bell pair and measured the sender's and receiver's half of the pair in the basis specified by each of them. Then for each round in $\widetilde{I}$, the sender publishes the trapdoor that corresponds to the receiver's key (the trapdoors are needed for the next step). Then, in \Cref{protocol:DIOTsenderVerifier:step7}, the sender and receiver correct their output in accordance with relation \eqref{equalBits}. At this point, we will have completed the first three steps of \Cref{protocol:RandQOT} in a device-independent manner.

The last two steps of \Cref{protocol:DIOTsenderVerifier}, \Cref{protocol:DIOTsenderVerifier:step8} and \Cref{protocol:DIOTsenderVerifier:step9}, are then identical to the last two steps of \Cref{protocol:RandQOT}, with slightly different notation.

\begin{breakablealgorithm}
\caption{DI Rand 1-2 OT\textsuperscript{$\ell$}}
\label{protocol:DIOTsenderVerifier}
\begin{algorithmic}[1]

\vspace{2mm}
\noindent {\bf Data generation:}
\vspace{2mm}
\State
\label{protocol:DIOTsenderVerifier:step1}
The sender and receiver execute $n$ rounds of \Cref{protocol:testOneVerifier} with the sender as Alice and the receiver as Bob, and with the following modification:

\vspace{2mm}
If $CT_i=\texttt{b}$, the receiver makes a uniformly random choice on whether to use the measurement basis question supplied by the sender or
\begin{equation*}
    y_i = [\texttt{Computational, Hadamard}]_c
\end{equation*}
where $c$ is the receiver's choice bit.\footnote{We thank Daochen Wang, Honghao Fu, and Qi Zhao for pointing out a correction related to the probability with which the receiver makes this measurement choice.} Let $I$ be the set of indices marking the rounds where this has been done.

\vspace{2mm}
\noindent For each round $i \in \{1, \ldots, n\}$, the receiver stores:
\begin{itemize}
    \item $c^B_i$
    \item $z^B_i$ if $CT_i=\texttt{a}$
    \item or $(d^B_i, y_i, b_i, h^B_i)$ if $CT_i=\texttt{b}$
\end{itemize}
The sender stores $\theta^A_i, \theta^B_i, (k^A_i, t^A_i), (k^B_i, t^B_i), c^A_i, CT_i \,$; and $z^A_i$ if $CT_i=\texttt{a}$ or $(d^A_i, x_i, a_i, h^A_i)$ and $y_i$ if $CT_i=\texttt{b}$.

\vspace{2mm}
\State
\label{protocol:DIOTsenderVerifier:step2}
For every $i \in \{1, \ldots, n\}$, the sender stores the variable $RT_i$ (round type), defined as follows:
\begin{itemize}
    \item if $CT_i = \texttt{b}$ and $\theta^A_i = \theta^B_i = \texttt{Hadamard}$, then $RT_i = \texttt{Bell}$
    \item else, set $RT_i = \texttt{Product}$
\end{itemize}

\vspace{2mm}
\State
\label{protocol:DIOTsenderVerifier:step3}
For every $i\in \{1,\ldots,n\}$, the sender chooses $T_i$, indicating a test round or generation round, as follows:
\begin{itemize}
    \item if $RT_i=\texttt{Bell}$, choose $T_i \in \{\texttt{Test, Generate}\}$ uniformly at random
    \item else, set $T_i=\texttt{Test}$.
\end{itemize}
The sender sends $(T_1,\ldots,T_n)$ to the receiver.

\vspace{5mm}
\noindent {\bf Testing:}
\vspace{2mm}
\State
\label{protocol:DIOTsenderVerifier:step4}
The receiver sends the set of indices $I$ to the sender. The receiver publishes their output for all $T_i=\texttt{Test}$ rounds where $i \notin I$. Using this published data, the sender sets a variable $W_i$ to \texttt{pass} if the checks for \Cref{protocol:testOneVerifier} are passed; otherwise, $W_i=\texttt{fail}$.

\vspace{2mm}
\State
\label{protocol:DIOTsenderVerifier:step5}
The sender computes the fraction of test rounds (for which the receiver has published data for) for which $W_i=\texttt{fail}$. If this exceeds the threshold $\delta'-\tau$ estimated by the sender prior to the protocol, then the protocol aborts.

\noindent {\bf Preparing data:}
\vspace{2mm}
\State
\label{protocol:DIOTsenderVerifier:step6}
Let $\widetilde{I} \coloneqq \{i : i \in I \text{ and } T_i = \texttt{Generate}\}$ and $n'=|\widetilde{I}|$. The sender publishes $\widetilde{I}$ and, for each $i \in \widetilde{I}$, the trapdoor $t^B_i$ that corresponds to the key $k^B_i$ that was given by the sender in the execution of \Cref{protocol:testOneVerifier}, \Cref{protocol:DIOTsenderVerifier:step1}.

\vspace{2mm}
\State
\label{protocol:DIOTsenderVerifier:step7}
For each $i \in \widetilde{I}$, the sender calculates $v^\alpha_i$ and defines $w^\alpha_i$ by
\begin{equation*}
    w^\alpha_i =
    \begin{cases}
        v^\alpha_i, &\text{if }  x_i = \texttt{Hadamard}\\
        0, &\text{if } x_i = \texttt{Computational}\\
    \end{cases}
\end{equation*}
and the receiver calculates $v^\beta_i$ and defines $w^\beta_i$ by
\begin{equation*}
    w^\beta_i =
    \begin{cases}
        0, &\text{if }  y_i = \texttt{Hadamard}\\
        v^\beta_i, &\text{if } y_i = \texttt{Computational}\\
    \end{cases}
\end{equation*}

\vspace{5mm}
\noindent {\bf Obtaining output:}
\vspace{2mm}
\State
\label{protocol:DIOTsenderVerifier:step8}
The sender picks two uniformly random hash functions $f_0, f_1 \in \mathcal{F}$, announces $f_0, f_1$ and $x_i$ for each $i \in \widetilde{I}$, and outputs $s_0=f_0(a\oplus w^\alpha|_{\widetilde{I}_0})$ and $s_1=f_1(a\oplus w^\alpha|_{\widetilde{I}_1})$, where \newline $\widetilde{I}_r \coloneqq \{i \in \widetilde{I} : x_i = [\texttt{Computational, Hadamard}]_r\}$.

\vspace{2mm}
\State
\label{protocol:DIOTsenderVerifier:step9}
Receiver outputs $s_c=f_c(b\oplus w^\beta|_{\widetilde{I}_c})$.

\end{algorithmic}
\end{breakablealgorithm}

Note that if \Cref{protocol:DIOTsenderVerifier:step5} is passed, then the fraction of failed test rounds does not exceed $\delta'-\tau$. Additionally, with probability at least $1-2e^{-\frac{\tau^2N}{3}}$, we have that the sender's estimate $1-\delta'$ of the device's winning probability $1-\delta$ satisfies $\delta'-\tau < \delta$. With the occurrence of these two events, the sender can use \Cref{selfTestGuarantee} and the IID assumption to say that for each $\texttt{Generate}$ round, one of the four Bell states has been prepared. This results in our statement of sender-security being conditioned on the high probability event that $\delta'-\tau<\delta$.

The rest of the protocol, which operates only on the rounds $i \in \widetilde{I}$, is then practically identical to \Cref{protocol:RandQOT}. The main difference here is that, for the relevant rounds, the sender is supplying the receiver with the trapdoor $t^B$ in \Cref{protocol:DIOTsenderVerifier:step6} to allow the receiver to compute $w^\beta$. Intuitively, this action does not give a dishonest receiver any advantage at this point, as the device has already been verified and the key-trapdoor pair only allows the receiver to learn one of the two bits that, collectively, indicates which of the four Bell states has been used in a given round.

Additionally, a dishonest sender also has no advantage in this device-independent setting when compared to \Cref{protocol:RandQOT}. Although the receiver must interact with the sender in \Cref{protocol:DIOTsenderVerifier}, while there was previously no need to in \Cref{protocol:RandQOT}, this interaction does not give the sender any information on the receiver's choice bit. To see this, observe that this interaction occurs in \Cref{protocol:DIOTsenderVerifier:step4} where the receiver publishes the set of indices $I$ and all outputs for $\texttt{Test}$ rounds so long as $i \notin I$. Thus, the output that the sender gets for $\texttt{Test}$ rounds says nothing about the receiver's actions since the input for these rounds was completely specified by the sender. It is only in the rounds where $i\in I$ that the receiver measures according to their choice bit, and for these rounds, only the set of indices $I$ is published, which says nothing of the actual choice bit.

It should also be noted why assumption \ref{leakageAssumption} is necessary. The use of \Cref{protocol:testOneVerifier} means that we can certify that the device has prepared a quantum state and measured it according to the prescribed measurement bases, while allowing arbitrary communication. Arbitrary communication poses a problem, though, to everlasting security in this setting where the sender and receiver do not trust each other. For instance, if the component held by the receiver leaked their measurement basis questions $y$, then the sender can immediately learn the receiver's choice bit by looking at what $y$ was in the rounds $i\in I$. Conversely, if the component held by the sender leaked their inputs and outputs, then a dishonest receiver could execute \Cref{protocol:DIOTsenderVerifier} honestly and still compromise sender-security. Indeed, suppose that, in executing the protocol honestly, the receiver obtained the string $s_0$ but stored the leaked inputs and outputs from the sender's component. To then learn $s_1=f_1(a\oplus w^{\alpha}|_{\widetilde{I}_1})$ after the protocol is over, the receiver must learn the sender's measurement outcomes $a$ and the bits $v^{\alpha}$ for the $\widetilde{I}_1$ rounds. The measurement outcomes $a$ would be amongst the leaked data, and so the task reduces to determining $v^{\alpha}$ from the following leaked data:
\begin{itemize}
    \item the key $k^A$ which indexes the function pair $(f_{k^A,0},f_{k^A,1})$
    \item the string $c^A$ which satisfies $f_{k^A,0}(x^A_0)=f_{k^A,1}(x^A_1)=c^A$
    \item the string $d^A$ which satisfies $v^{\alpha}=d^A \cdot (x^A_0 \oplus x^A_1)$
\end{itemize}
If the receiver can find the claw $(x^A_0, x^A_1)$, then they can compute $v^\alpha$. Finding the claw $(x^A_0,x^A_1)$ is quantum-computationally hard without access to the trapdoor $t^A$ (which was never given to the device), but with enough time and computational power, this could be done, and thus everlasting security for the sender is compromised. Consequently, assumption~\ref{leakageAssumption} is necessary in this context for everlasting security.

The proof of \Cref{baseReceiverSecurity} largely carries over to the proof of perfect receiver-security for \Cref{protocol:DIOTsenderVerifier}. As for sender-security for \Cref{protocol:DIOTsenderVerifier}, the proof is similar to the proof of \Cref{baseSenderSecurity}, though we will now have to use \Cref{selfTestGuarantee} and analyze the probability of not aborting. Note that the case where the sender, receiver, and the device behave honestly is analyzed in the proof of sender-security (see Case 1 of \Cref{DIsenderSecurity}).

\begin{prop}
\label{DIOTreceiverSecurity}
\Cref{protocol:DIOTsenderVerifier} is perfectly receiver-secure.
\end{prop}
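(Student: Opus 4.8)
The plan is to mirror the proof of \Cref{baseReceiverSecurity}, with two additions forced by the device-independent setting: we must keep track of the abort bit $Z$ (which was trivial for \Cref{protocol:RandQOT}), and we must invoke Assumption~\ref{leakageAssumption} to rule out any leakage of the choice bit to the sender.

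First I would pin down exactly what the honest receiver ever sends to the sender. The only receiver-to-sender messages in \Cref{protocol:DIOTsenderVerifier} are: in \Cref{protocol:DIOTsenderVerifier:step4}, the index set $I$ (built by flipping, in each round with $CT_i=\texttt{b}$, a coin whose bias does not depend on $c$) together with the stored data for the rounds with $T_i=\texttt{Test}$ and $i\notin I$; and, inside the execution of \Cref{protocol:testOneVerifier} called in \Cref{protocol:DIOTsenderVerifier:step1}, the relaying of keys and challenge types to the receiver's device component. For a test round with $i\notin I$ the receiver uses the measurement-basis question supplied by the sender rather than $[\texttt{Computational, Hadamard}]_c$, so the data published for such a round is a function only of the sender's own inputs and of the outputs returned by the receiver's device component to those inputs; by Assumption~\ref{leakageAssumption} (and the shielding of the receiver's component) those outputs are uncorrelated with $c$. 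It is only in the rounds $i\in I$ that the receiver does something that depends on $c$, and for those rounds the sender learns nothing beyond the membership of $i$ in $I$, whose distribution does not depend on $c$. Hence both the overall state $\widetilde S$ of the sender and the abort bit $Z$ --- which is computed in \Cref{protocol:DIOTsenderVerifier:step5} from the \Cref{protocol:DIOTsenderVerifier:step4} data and the sender's own data --- are independent of $C$; in particular $\rho_{CZ\widetilde S}=\rho_C\otimes\rho_{Z\widetilde S}$.

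Next I would construct $S_0',S_1'$ as in \Cref{baseReceiverSecurity}. In a fresh probability space, run the protocol against $\widetilde S$ but give the receiver unbounded quantum memory: for each round $i\in\widetilde I$, have the receiver retain the half of the state that would otherwise have been measured in basis $[\texttt{Computational, Hadamard}]_c$ in \Cref{protocol:DIOTsenderVerifier:step1}, and defer its measurement until after the sender announces the bases $x_i$ in \Cref{protocol:DIOTsenderVerifier:step8}, at which point the receiver measures the half of round $i$ in basis $x_i$, obtaining a string $B$. Using the trapdoors $t^B_i$ revealed in \Cref{protocol:DIOTsenderVerifier:step6} the receiver forms $W^\beta$ and sets $\hat S'_r:=f_r(B\oplus W^\beta|_{\widetilde I_r})$ for $r\in\{0,1\}$; this is well defined because $\widetilde I_0$ and $\widetilde I_1$ are disjoint, so the two strings are read off from measurements in distinct bases on distinct halves. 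Finally sample $\hat C$ according to $P_C$ and put $\hat Y:=\hat S'_{\hat C}$, so $\Pr[\hat Y=\hat S'_{\hat C}]=1$. Since every measurement in this experiment is performed in the sender-announced basis $x_i$ with no reference to $\hat C$, and $\hat C$ is drawn only at the end, the tuple $(\hat S'_0,\hat S'_1,Z,\widetilde S)$ is independent of $\hat C$. It then remains to observe that the part of this experiment also present in the real run is unchanged: the only difference between the two runs is when and in which basis the halves at positions $i\in\widetilde I_{1-C}$ are measured, and this affects neither the receiver's real output $Y=f_C(\cdot|_{\widetilde I_C})$ (which uses only the $\widetilde I_C$ halves, measured in basis $x_i=[\texttt{Computational, Hadamard}]_C$ in both runs) nor the shielded sender's view nor $Z$. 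Hence $\hat\rho_{\hat C\hat Y Z\widetilde S}=\rho_{CYZ\widetilde S}$, so the corresponding $S_0',S_1'$ exist in the real experiment, and conditioning on $Z=1$ gives $\Pr(Z=1)\,D(\rho^{Z=1}_{CS_0'S_1'\widetilde S},\rho^{Z=1}_C\otimes\rho^{Z=1}_{S_0'S_1'\widetilde S})=0$ and $\Pr[Y=S_C']=1$, which is exactly \eqref{receiverSecurity} with $\varepsilon=0$.

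I expect the crux to be the first step --- establishing rigorously that $Z$ and $\widetilde S$ carry no information about $C$. Unlike in \Cref{protocol:RandQOT}, here the receiver both interacts with a device whose receiver-side component may behave adversarially and communicates with the sender during testing, so one must check that Assumption~\ref{leakageAssumption}, together with the facts that for $i\notin I$ the receiver uses the sender-chosen basis and that for $i\in I$ only the $c$-independent datum ``$i\in I$'' is transmitted, really does close off every path by which $c$ could reach the sender. A secondary point needing a word of care is the deferred-measurement construction of $S_0',S_1'$, since in \Cref{protocol:DIOTsenderVerifier} the halves of the Bell pairs are physically held inside the (shielded) device component rather than by the receiver; but, exactly as in \Cref{baseReceiverSecurity}, once an unbounded-memory receiver is allowed to retain these halves until \Cref{protocol:DIOTsenderVerifier:step8}, the remainder is routine bookkeeping.
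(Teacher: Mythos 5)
Your proposal is correct and follows essentially the same strategy as the paper's proof: defer the receiver's measurements for $i\in\widetilde{I}$ until the sender announces the bases $x_i$ in \Cref{protocol:DIOTsenderVerifier:step8}, use the trapdoors $t^B_i$ from \Cref{protocol:DIOTsenderVerifier:step6} to form $W^\beta$, define $\hat S'_r=f_r(B\oplus W^\beta|_{\widetilde{I}_r})$, sample $\hat C$ independently at the end, and then argue the modified and real experiments produce identical joint states because they differ only in when and in which basis the $\widetilde{I}_{1-C}$ halves are measured. Your additional preliminary accounting of which receiver messages reach the sender (only $I$ and the $i\notin I$ test data) and the explicit inclusion of the abort bit $Z$ in the equality $\hat\rho_{\hat C\hat Y Z\widetilde S}=\rho_{CYZ\widetilde S}$ make the bookkeeping a little more explicit than the paper's, but the argument is the same.
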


\begin{proof}
The ccq-state $\rho^{Z=1}_{C Y \widetilde{S}}$ is defined by the experiment where $\widetilde{S}$ interacts with an honest memory-bounded $R$ and the protocol does not abort. Now, in a new Hilbert space, we define the ccccq-state $\hat{\rho}^{Z=1}_{\hat{C} \hat{Y} \hat{S}'_0 \hat{S}'_1 \widetilde{S}}$ according to a different experiment.

In this different experiment, we let $\widetilde{S}$ interact with a receiver that has unbounded quantum memory. Suppose the receiver has not actually input any measurement questions $y_i$ into their component of the device for rounds where $i \in I$. Let $A$ be the string the sender gets after inputting $x_i$ for $i \in \widetilde{I}$, and let $W^{\alpha}$ be the string where the $i$-th entry is defined as
\begin{equation*}
    W^\alpha_i =
    \begin{cases}
        v^\alpha_i, \quad &\text{if } x_i = \texttt{Hadamard}\\
        0, \quad &\text{if } x_i = \texttt{Computational}
    \end{cases}
\end{equation*}

Now, the receiver waits for \Cref{protocol:DIOTsenderVerifier:step8} to receive $x$ for the $i \in \widetilde{I}$ rounds. Let $B$ be the string the receiver gets after inputting $x$ for the $i \in \widetilde{I}$ rounds. By assumption \ref{leakageAssumption}, the sender is oblivious to the measurement basis questions the receiver has given to their component of the device, along with the answer bits returned to the receiver. Note that at this point the receiver will also have, from \Cref{protocol:DIOTsenderVerifier:step6}, $t^B_i$ for each $i \in \widetilde{I}$. The receiver uses $t^B_i$ to calculate $v^\beta_i$ for each $i \in \widetilde{I}$ and defines $W^\beta$ to be the string where the $i$-th entry is
\begin{equation*}
    W^\beta_i =
    \begin{cases}
        0, \quad &\text{if } x_i = \texttt{Hadamard}\\
        v^{\beta}_i, \quad &\text{if } x_i = \texttt{Computational}
    \end{cases}
\end{equation*}

Define $\hat{S}'_0 = f_0(A \oplus W^{\alpha} |_{\widetilde{I}_0})$ and $\hat{S}'_1 = f_1(A \oplus W^{\alpha}|_{\widetilde{I}_1})$. Note that $A \oplus W^{\alpha} = B \oplus W^{\beta}$. Sample $\hat{C}$ according to $P_C$ and set $\hat{Y} = \hat{S}'_C$. It follows by construction that $\text{Pr}[\hat{Y} \neq \hat{S}'_{\hat{C}}]=0$ and $\hat{\rho}^{Z=1}_{\hat{C}}$ is independent of $\hat{\rho}^{Z=1}_{\hat{S}'_0 \hat{S}'_1 \widetilde{S}}$.

It now remains to argue that,
\begin{equation*}
    \hat{\rho}^{Z=1}_{\hat{C} \hat{Y} \widetilde{S}} = \rho^{Z=1}_{C Y \widetilde{S}}
\end{equation*}
so that the corresponding $S'_0$ and $S'_1$ also exist in the original experiment. But, this is satisfied since the only difference between the two experiments is when and what $x_i$ the receiver inputs for $i \in \widetilde{I}_{1-C}$ rounds, which does not affect $\rho^{Z=1}_{C Y \widetilde{S}}$ respectively $\hat{\rho}^{Z=1}_{\hat{C} \hat{Y} \widetilde{S}}$.
\end{proof}

For the following proposition, recall \Cref{setBoundedReceivers} which defines $\mathfrak{R}_\gamma$ as the set of all possible quantum dishonest receivers in \Cref{protocol:DIOTsenderVerifier} which have quantum memory of size at most $\gamma n'$ when \Cref{protocol:DIOTsenderVerifier:step8} of \Cref{protocol:DIOTsenderVerifier} is reached.

Also note that if we were not dealing with finite statistics, then the sender's initial estimate $1-\delta'$ of the device's winning probability $1-\delta$ could be done with arbitrary precision and so the following proposition would no longer be conditional on $\delta'-\tau<\delta$.

\begin{prop}
\label{DIsenderSecurity}
\Cref{protocol:DIOTsenderVerifier} is $(\varepsilon+(\delta'-\tau)^r)$-sender-secure against $\mathfrak{R}_\gamma$ for a negligible (in $n'$ and~$\eta$) $\varepsilon + (\delta'-\tau)^r$ if $\delta'-\tau<\delta$ (which occurs with probability at least $1-2e^{-\frac{\tau^2N}{3}}$) and if there exists a $k>0$ such that $\gamma n' \leq n'/4 - 2\ell - kn'$, where $n' = |\widetilde{I}|$, $\eta$ is the security parameter used in $\Cref{protocol:testOneVerifier}$, $(\delta'-\tau)$ and $N$ is arising from the sender's estimate of $\delta$, and $r$ is from the application of \Cref{selfTestGuarantee}.
\end{prop}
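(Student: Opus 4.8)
The strategy is to use the self-testing guarantee (\Cref{selfTestGuarantee}) together with the IID assumption to reduce to the EPR-based analysis already carried out for \Cref{baseSenderSecurity}, while tracking the abort probability $\Pr(Z=1)$ and the conditioning on the event $\delta'-\tau<\delta$. I would split into two cases. If the receiver is honest, \eqref{senderSecurity} is immediate and only correctness and completeness need checking: conditioned on an honest device every published test round passes, so the empirical failure fraction lies below the threshold $\delta'-\tau$ except with the negligible probability of the Chernoff estimate in \Cref{subsubsection:ModSelfTest}, giving negligible $\Pr(Z=0)$; and on the non-aborting branch, relation \eqref{equalBits}, the corrections in \Cref{protocol:DIOTsenderVerifier:step7}, and the outputs in \Cref{protocol:DIOTsenderVerifier:step8}, \Cref{protocol:DIOTsenderVerifier:step9} give $b\oplus w^\beta|_{\widetilde I_c}=a\oplus w^\alpha|_{\widetilde I_c}$, hence $Y=S_C$ exactly.

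For a dishonest $\widetilde R\in\mathfrak{R}_\gamma$, note that \eqref{senderSecurity} carries the factor $\Pr(Z=1)$, so it suffices to bound $\Pr(Z=1)\,D\big(\rho^{Z=1}_{S_{1-C'}S_{C'}C'\widetilde R},\tfrac{1}{2^{\ell}}\mathds{1}\otimes\rho^{Z=1}_{S_{C'}C'\widetilde R}\big)$, and I may condition on the non-aborting branch throughout. On that branch the fraction of failed published test rounds is at most $\delta'-\tau$; since these are a random IID sub-sample of the rounds in which the device operates identically — and since the rounds in $I$ are withheld, keeping receiver-security untouched — the IID assumption and the conditioning $\delta'-\tau<\delta$ certify that the device wins each round, in particular each $\texttt{Generate}$ round $i\in\widetilde I$, with probability at least $1-(\delta'-\tau)$. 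Applying \Cref{selfTestGuarantee} to each such round (the isometry $V$ is the same in every round by IID, and the $n'=|\widetilde I|$ per-round statements combine by a hybrid argument costing only a polynomial factor still inside $\textsf{negl}(\eta)$), the real state on the $\widetilde I$ registers is $\big(\mathcal{O}((\delta'-\tau)^r)+\textsf{negl}(\eta)\big)$-close to the ideal state in which, for each $i\in\widetilde I$, the device prepared the Bell pair $\ket{\phi^{(v^\alpha_i,v^\beta_i)}}$ and measured the sender's half in basis $x_i$ and the receiver's half in its chosen basis; the adaptive-hardcore-bit property (property~\ref{adaptHardBit}) ensures $\widetilde R$, even after receiving the trapdoors $t^B_i$ in \Cref{protocol:DIOTsenderVerifier:step6}, learns nothing about the bits $v^\alpha_i$.

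From this ideal description the $\widetilde I$ rounds form an EPR-based instance of \Cref{protocol:RandQOT} on $n'$ rounds with $\widetilde R$ holding at most $\gamma n'$ qubits at \Cref{protocol:DIOTsenderVerifier:step8}, so I would replay the proof of \Cref{baseSenderSecurity} with $n$ replaced by $n'$: the entropic uncertainty relation (\Cref{entropicUncertainty}) gives $H^\varepsilon_\infty(A\mid X)\ge(1/2-2\lambda)n'$; the round-by-round computation shows $H^\varepsilon_\infty(K\mid X)=H^\varepsilon_\infty(A\mid X)$ for $K=A\oplus W^\alpha$ and $H^\varepsilon_\infty(K_0K_1\mid X)\ge(1/2-2\lambda)n'$; min-entropy splitting (\Cref{entropySplit}) produces $C'$ with $H^{\varepsilon+\varepsilon'}_\infty(K_{1-C'}\mid X,C')\ge(1/4-\lambda-\lambda')n'-1$; the chain rule (\Cref{chainRule}) absorbs $F_{C'}(K_{C'})$ down to at least $\gamma n'+\ell+\kappa n'$ using $\gamma n'\le n'/4-2\ell-kn'$; and privacy amplification (\Cref{privacyAmp}) with $U=XS_{C'}F_{C'}C'$ bounds the trace distance by $\tfrac{1}{2}2^{-\kappa n'/2}+2\varepsilon+4\varepsilon'$. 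Adding the self-testing error gives the claimed $\varepsilon+(\delta'-\tau)^r$, which is negligible in $n'$ and $\eta$ when $\delta'-\tau$ is negligible.

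I expect the main obstacle to be the reduction in the preceding paragraph: making precise how the IID assumption transfers the certification from the $\texttt{Test}$ rounds to the $\texttt{Generate}$ rounds, how the per-round error of \Cref{selfTestGuarantee} aggregates over the $n'$ rounds, and — most delicately — that a dishonest $\widetilde R$ who may deviate in \Cref{protocol:DIOTsenderVerifier:step1} (choosing whether to use the sender's basis question), who relays every input and output of its component, and who later receives the $t^B_i$, can neither invalidate the self-testing guarantee (it cannot, being foldable into the device, which never sees $\theta^A_i,\theta^B_i$ or $t^A_i$) nor acquire more min-entropy about $K_{1-C'}$ than its $\gamma n'$-qubit memory permits. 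The rest is bookkeeping: composing the above with the conditioning on $\delta'-\tau<\delta$ and on non-abort through a union bound with the Chernoff estimate of \Cref{subsubsection:ModSelfTest}.
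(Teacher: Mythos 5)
Your overall route matches the paper's: certify \texttt{Generate} rounds via \Cref{selfTestGuarantee} plus the IID assumption, then replay the proof of \Cref{baseSenderSecurity} with $n$ replaced by $n'$. However, there are two substantive gaps in your sketch that the paper addresses carefully and that you only flag as ``obstacles'' without resolving.

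First, your invocation of \Cref{privacyAmp} with $U = XS_{C'}F_{C'}C'$ omits the extra classical side information that a dishonest receiver holds in the device-independent setting. The receiver is handed, or may have been leaked, the key-trapdoor pairs $(k^B_i,t^B_i)$ for $i\in\widetilde I$, the sender-supplied basis questions $y_i$, and possibly $k^A_i, c^A_i, d^A_i$. The paper bundles all of this into a random variable $O$, argues case by case that $O$ is independent of $K_{1-C'}$ given the other conditionings (the trapdoor $t^B$ only reveals $v^\beta$, which is already absorbed into $K$; the leaked $k^A,c^A,d^A$ together with unbounded post-protocol computation only lets $\widetilde R$ recover $v^\alpha$, which again does not help without $a|_{\widetilde I_{1-c}}$), and then conditions on $O$ in both the chain rule step and in privacy amplification with $U = XS_{C'}F_{C'}O C'$. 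Merely citing the adaptive hardcore bit property, as you do, does not substitute for this: after \Cref{protocol:DIOTsenderVerifier:step6} the receiver actually has the trapdoor $t^B$, so property~\ref{adaptHardBit} no longer applies to the corresponding function pair, and one must track what $t^B$ actually buys.

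Second, you ``condition on the non-aborting branch throughout'' but never account for the entropy cost of that conditioning. The smooth min-entropy of $K_{1-C'}$ conditioned on $\{Z=1\}$ can drop below the unconditioned value when aborting is correlated with the adversary's state; the paper invokes a standard bound (Lemma~10 of \cite{TL17}) showing the drop is at most $2\log(1/\Pr(Z=1))$, and likewise redefines the smoothing parameter as $\varepsilon''=(\varepsilon+2\varepsilon')/\Pr(Z=1)$. These degradations are then exactly cancelled by the prefactor $\Pr(Z=1)$ in \cref{senderSecurity}. Without this bookkeeping your final bound does not follow. Relatedly, the paper also treats the regime where the dishonest receiver causes many failures as its own case (there $\Pr(Z=1)$ is itself at most $\varepsilon+(\delta'-\tau)^r$, so \cref{senderSecurity} holds trivially), whereas you fold everything into a single conditioning; your approach can be made to work but only once the $\log(1/\Pr(Z=1))$ penalty is tracked as above.
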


\begin{proof}

We consider the different cases regarding the probability of not aborting. Let $\delta_F$ denote the fraction of failed test rounds in \Cref{protocol:DIOTsenderVerifier}. If $\delta_F$ exceeds the threshold $\delta'-\tau$, \Cref{protocol:DIOTsenderVerifier} aborts. Given this, let $Z$ be the binary random variable which describes whether \Cref{protocol:DIOTsenderVerifier} aborts or not. That is,
\begin{equation}
Z \coloneqq
\begin{cases}
    1, &\text{ if } \delta_F \leq \delta'-\tau\\
    0, &\text{ if } \delta_F > \delta'-\tau
\end{cases}
\end{equation}

\paragraph{Case 1, honest behaviour:}
When both parties and the device behave honestly, the fraction of failed test rounds $\delta_F$ is small; by this, we mean that
\begin{equation*}
    \Pr(\delta_F \leq \delta'-\tau) \geq 1-(\varepsilon + (\delta'-\tau)^r).
\end{equation*}
Since $E(Z)=\Pr(\delta_F \leq \delta'-\tau)$, we then have $1-E(Z) \leq (\varepsilon + (\delta'-\tau)^r)$. We can now use the Chernoff bound to show that the probability of aborting is small.
\begin{align}
    \Pr(Z=0) &=P_Z(z \leq 0) \nonumber\\
    & =P_Z(-z \geq 0) \nonumber\\
    & \leq e^{-t} E(Z) + (1-E(Z)) \nonumber\\
    & \leq e^{-t} E(Z) + (\varepsilon + (\delta'-\tau)^r), \quad \forall t\geq 0. \label{honestAbort}
\end{align}
Then taking the limit $t \rightarrow \infty$ in \cref{honestAbort}, we see that the probability of aborting is small, $\Pr(Z=~0) \leq~(\varepsilon +~(\delta'-\tau)^r)$, which satisfies the completeness condition of \Cref{randOT}.

\paragraph{Case 2, dishonest behaviour and large $\delta_F$:}
We now show that when the behaviour is dishonest and the fraction of failed test rounds $\delta_F$ is large, in the sense that
\begin{equation*}
    \Pr(\delta_F \leq \delta'-\tau) \leq (\varepsilon + (\delta'-\tau)^r),
\end{equation*}
the probability of not aborting is small. For this, we again use the Chernoff bound,
\begin{align}
    \Pr(Z=1) &=P_Z(z \geq 1) \nonumber\\
    & \leq E(Z) + e^{-t}(1-E(Z)) \nonumber\\
    & \leq (\varepsilon + (\delta'-\tau)^r) + e^{-t}(1-E(Z)), \quad \forall t\geq 0. \label{dishonestLargeAbort}
\end{align}
Then taking the limit $t \rightarrow \infty$ in \cref{dishonestLargeAbort}, the probability of not aborting is small,
\begin{equation*}
    \Pr(Z=1) \leq~(\varepsilon +~(\delta'-~\tau)^r).
\end{equation*}
Thus, the overall expression in \cref{senderSecurity} is satisfied.

\paragraph{Case 3, dishonest behaviour and small $\delta_F$:}
In this case, we bound the overall expression in \cref{senderSecurity}. To do this, we first consider a single round $i \in \widetilde{I}$. For this single round, we make a slight abuse of notation by letting $v^\alpha, v^\beta$ be, respectively, the bits computed by the sender and the receiver in \Cref{protocol:DIOTsenderVerifier:step7}, and letting $X, Y$ and $A, B$ be the classical random variables describing the sender's and receiver's questions and answers, respectively. Let $\sigma^{(v^\alpha,v^\beta)}$ be the joint state of the device in \Cref{protocol:DIOTsenderVerifier:step1} right before the device performs the measurements $P^{(a,b)}_{x,y}$. Then the state after \Cref{protocol:DIOTsenderVerifier:step1}~is
\begin{equation*}
    \sum_{x,y,a,b} P_X(x) P_Y(y) \tr[P^{(a,b)}_{x,y} \sigma^{(v^\alpha,v^\beta)} P^{(a,b)}_{x,y}] \otimes \ket{x, y, a, b}\bra{x, y, a, b}_{XYAB}.
\end{equation*}
Now we consider the event $\delta'-\tau < \delta$, which, from the end of \Cref{subsubsection:ModSelfTest}, occurs with probability at least $1-2e^{-\frac{\tau^2N}{3}}$. Observe that if the protocol did not abort at \Cref{protocol:DIOTsenderVerifier:step5}, then $\delta_F \leq \delta'-\tau$, and hence, $1-\delta < 1-(\delta'-\tau) \leq 1-\delta_F$, meaning that the winning condition of \Cref{protocol:testOneVerifier} is satisfied with probability at least $1-(\delta'-\tau)$ in the \texttt{Test} rounds; but after \Cref{protocol:DIOTsenderVerifier:step1}, it has not yet been decided whether a round will be a \texttt{Test} round or a \texttt{Generate} round, and so, we can use the IID assumption to apply \Cref{selfTestGuarantee} to \texttt{Generate} rounds. Using \Cref{selfTestGuarantee} in our $i \in \widetilde{I}$ round, the continuous and cyclical properties of the trace, and that $V^\dagger V = \mathds{1}$, we find that the state for this round $\rho^{Z=1}_{XYAB}$ must be within trace distance $\mathcal{O}((\delta'-\tau)^r)+\textsf{negl}(\eta)$ of the ideal state
\begin{equation*}
    \xi_{XYAB} = \sum_{v^\alpha,v^\beta,x,y,a,b} P_X(x) P_Y(y) \bra{\phi^{(v^\alpha,v^\beta)}} Q^a_x \otimes Q^b_y \ket{\phi^{(v^\alpha,v^\beta)}} \otimes \ket{x, y, a, b}\bra{x, y, a, b}_{XYAB}.
\end{equation*}
This confirms that in each round $i \in \widetilde{I}$, the device has prepared a Bell state and measured it according to the measurement basis choices of the sender and the receiver.

Now, let us consider the state $\rho^{Z=1}_{XAOE}$ in the scenario where the protocol has not aborted and $\widetilde{R}$ has measured all but $\gamma n'$ of their qubits from the $\widetilde{I}$ rounds, where $n'=|\widetilde{I}|$. For the rest of the proof, we make a slight abuse of notation by dropping the $Z=1$ superscript on the state for the sake of readability. Now, since we are making the IID assumption, this state is an $n'$-fold tensor product and the following are $n'$-tuples with each entry representing one of the $i\in\widetilde{I}$ rounds:
\begin{itemize}
    \item $X$ is the classical random variable describing the random choice of bases of the sender.
    \item $A$ is the classical random variable describing the sender's results after measuring their part of the state in the random bases $X$.
    \item $O$ is the classical random variable which contains the random choices of the key-trapdoor pairs $(k^B, t^B)$ and the measurement basis questions $y$ supplied by the sender. It will also contain any other information that the sender's component of the device may have leaked to the receiver; this may include $k^A, c^A, d^A$, but by assumption \ref{leakageAssumption}, it cannot include the sender's measurement basis questions $x$ or their answer bits $a$.
    \item $E$ is the random state that describes $\widetilde{R}$'s part of the state.
\end{itemize}
The state $\rho_{XAOE}$ then satisfies
\begin{equation*}
    D(\rho_{XAOE}, \xi_{XAOE}) \leq n'\Big( \mathcal{O}((\delta'-\tau)^r)+\textsf{negl}(\eta)\Big) = \mathcal{O}((\delta'-\tau)^r)+\textsf{negl}(\eta)
\end{equation*}
where $\xi_{XAOE}$ is the ideal state.

Analyzing the ideal state, the proof is now similar to the proof of \Cref{baseSenderSecurity}. We start by lower bounding the smooth min-entropy for the state that is not conditioned on the event that the protocol does not abort.

At \cref{baseSenderSecurity:parameters}, we choose $\lambda, \lambda', \kappa$ such that
\begin{equation}
\label{senderSecurity:parameters}
    \gamma n' \leq (1/4 - \lambda - 2\lambda' - \kappa)n'-2\ell-1.
\end{equation}

Then at \cref{conditionC}, we have
\begin{equation*}
    H^{\varepsilon + \varepsilon'}_{\infty}(K_{1-C'}|X,C') \geq (1/4-\lambda-\lambda')n'-1,
\end{equation*}
where $\varepsilon$ is exponentially small in $n'$ and comes from \Cref{entropicUncertainty}, and $\varepsilon'=2^{-\lambda'n'}$.

Then at \cref{baseSenderSecurity:conditionFC}, in addition to conditioning on $F_{C'}$, we also condition on the random variable $O$ because it too is independent. It is easy to see this for the measurement questions~$y$ supplied by the sender since they are generated uniformly randomly. Regarding the key-trapdoor pairs, observe that knowledge of $(k^B,t^B)$ makes properties \ref{adaptHardBit} and \ref{injInv} of the ENTCF family computationally easy (because of property \ref{trapdoorEasy}). That is,
\begin{itemize}
    \item With $(k^B,t^B)$, the receiver can overcome the injective invariance property of the ENTCF family (property \ref{injInv}) and determine what $\theta^B$ is. However, the sender only publishes $(k^B,t^B)$ for \texttt{Bell} rounds, as these are the only types of rounds that are useable for accomplishing Rand 1-2 OT\textsuperscript{$\ell$}, and so the ability to overcome the injective invariance property and learn $\theta^B$ is redundant at this point.

    \item With $(k^B,t^B)$, property \ref{adaptHardBit} of the ENTCF family becomes computationally easy. That is, the unique claw $(x^B_0, x^B_1)$ can be calculated with the function pair indexed by the key $k^B$, the trapdoor $t^B$, and the string $c^B$. This means the receiver can determine the precise form of the state $\ket{\psi_1^B}$ (this is the analogous state of $\ket{\psi_1^A}$ in \cref{productForm}), but $\ket{\psi_1^B}$ is independent of $\ket{\psi_1^A}$ until the entangling operation. After the entangling operation, knowledge of $\ket{\psi_1^B}$ is equivalent to learning $v^\beta$, but this is necessary and accounted for with the random variable $K$ in the proof of \Cref{baseSenderSecurity}.
\end{itemize}
As for information leaked from the sender's component of the device, knowledge of $k^A, c^A, d^A$ makes it possible for the receiver to compute the bit $v^\alpha$ (see the discussion immediately before \Cref{DIOTreceiverSecurity}), though they cannot do this efficiently since they do not have access to the sender's trapdoor $t^A$. Given that the receiver will also know $v^\beta$, the receiver could eventually learn precisely what Bell state was used in each of the $\widetilde{I}$ rounds. To then learn the other string $s_{1-c}$, the receiver needs $a|_{\widetilde{I}_{1-c}}$ which appears uniformly random to the receiver since the sender and receiver chose different measurement bases for the $\widetilde{I}_{1-c}$ rounds. Thus, the receiver can do no better than guess, correctly, the other string $s_{1-c}$ with probability $1/2^\ell$.

So, conditioning on $O$ as well, and using the chain rule \Cref{chainRule} as in \cref{baseSenderSecurity:conditionFC}, we obtain
\begin{equation*}
    H^{\varepsilon + 2\varepsilon'}_{\infty}(K_{1-C'}|X F_{C'}(K_{C'}) F_{C'} O,C') \geq \gamma n' + \ell + \kappa n'.
\end{equation*}

We now consider the smooth min-entropy of the state conditioned on not aborting (we use $\{Z=1\}$ to denote the event of not aborting), which is lower bounded by the unconditioned smooth min-entropy (see Lemma 10 of \cite{TL17}). Conditioning the min-entropy on this event decreases the min-entropy by at most $\log(1/\Pr(Z=1))$. So,
\begin{align*}
    H^{\varepsilon + 2\varepsilon'}_{\infty}(&K_{1-C'}|X F_{C'}(K_{C'}) F_{C'} O,C')_{\{Z=1\}}\\
    &\geq H^{\varepsilon + 2\varepsilon'}_{\infty}(K_{1-C'}|X F_{C'}(K_{C'}) F_{C'} O,C') - 2\log\bigg(\frac{1}{\Pr(Z=1)}\bigg)\\
    &\geq \gamma n' + \ell + \kappa n' - 2\log\bigg(\frac{1}{\Pr(Z=1)}\bigg)
\end{align*}

Additionally, let $\varepsilon'' \coloneqq (\varepsilon + 2\varepsilon')/\Pr(Z=1)$. Then,
\begin{align*}
    H^{\varepsilon''}_{\infty}(K_{1-C'}|X F_{C'}(K_{C'}) F_{C'} O,C')_{\{Z=1\}}&\geq H^{\varepsilon + 2\varepsilon'}_{\infty}(K_{1-C'}|X F_{C'}(K_{C'}) F_{C'} O,C')_{\{Z=1\}}\\
    &\geq \gamma n' + \ell + \kappa n' - 2\log\bigg(\frac{1}{\Pr(Z=1)}\bigg)
\end{align*}

Letting $S_0 = F_0(K_0)$ and $S_1 = F_1(K_1)$, and setting $U= X S_{C'} F_{C'}O C'$, we get, after applying \Cref{privacyAmp},
\begin{align*}
    D(\xi_{S_{1-C'}F_{1-C'} X S_{C'} F_{C'} O C' E}, & \, \tfrac{1}{2^{\ell}}\mathds{1} \otimes \xi_{F_{1-C'} X S_{C'} F_{C'} O C' E})\\
    & \leq \frac{1}{2} 2^{-\frac{1}{2}(H^{\varepsilon''}_{\infty}(K_{1-C'}| X S_{C'} F_{C'} O C')_{\{Z=1\}}-\gamma n'-\ell)}+2\varepsilon''\\
    & \leq \bigg(\frac{1}{2} 2^{-\frac{1}{2}\kappa n'} + 2\varepsilon + 4\varepsilon'\bigg)\frac{1}{\Pr(Z=1)}.
\end{align*}
Now, using the triangle inequality twice,
\begin{align*}
    D(\rho_{S_{1-C'} F_{1-C'} U E}, \,& \tfrac{1}{2^{\ell}}\mathds{1} \otimes \rho_{F_{1-C'} U E}) \\
    &\leq D(\rho_{S_{1-C'}F_{1-C'} U E}, \xi_{S_{1-C'}F_{1-C'} U E}) + D(\xi_{S_{1-C'}F_{1-C'} U E}, \, \tfrac{1}{2^{\ell}}\mathds{1} \otimes \rho_{F_{1-C'} U E})\\
    &\leq \mathcal{O}((\delta'-\tau)^r) + \textsf{negl}(\eta) + D(\xi_{S_{1-C'}F_{1-C'} U E}, \, \tfrac{1}{2^{\ell}}\mathds{1} \otimes \xi_{F_{1-C'} U E})\nonumber\\
    & \qquad\qquad\qquad\qquad\qquad\qquad + D(\tfrac{1}{2^{\ell}}\mathds{1} \otimes \xi_{F_{1-C'} U E},\tfrac{1}{2^{\ell}}\mathds{1} \otimes \rho_{F_{1-C'} U E})\\
    &\leq 2(\mathcal{O}((\delta'-\tau)^r)+\textsf{negl}(\eta)) + \bigg(\frac{1}{2} 2^{-\frac{1}{2}\kappa n'} + 2\varepsilon + 4\varepsilon'\bigg)\frac{1}{\Pr(Z=1)}
\end{align*}
Thus,
\begin{align*}
    \Pr(Z=1) D(\rho_{S_{1-C'} F_{1-C'} U E}, \,& \tfrac{1}{2^{\ell}}\mathds{1} \otimes \rho_{F_{1-C'} U E})\\
    & \leq 2 (\mathcal{O}((\delta'-\tau)^r)+\textsf{negl}(\eta)) + \frac{1}{2} 2^{-\frac{1}{2}\kappa n'} + (2\varepsilon + 4\varepsilon') \qedhere
\end{align*}

\end{proof}

\fi


\bibliographystyle{alphaarxiv.bst}
\bibliography{full.bib,quantum.bib}

\end{document}